\newcommand{\gwsimple}{\ensuremath{O( \log |X|) }}
\newcommand{\gwbadapprox}{\ensuremath{ \Omega(m / \log m) }}
\newcommand{\mfdpos}{MFD$_{\mathbb{N}}$\xspace}
\newcommand{\mfdint}{MFD$_{\mathbb{Z}}$\xspace}
\newcommand{\mfdy}{MFD$_{\mathbb{Y}}$\xspace}
\newcommand{\mccdint}{MCCD$_{\mathbb{Z}}$\xspace}
\newcommand{\mccdpos}{MCCD$_{\mathbb{N}}$\xspace}
\newcommand{\mccdy}{MCCD$_{\mathbb{Y}}$\xspace}
\newcommand{\mccdsizey}[2]{\textsf{mccd}_{\mathbb{Y}}(#1,#2)}
\newcommand{\mccdsizeint}[2]{\textsf{mccd}_{\mathbb{Z}}(#1,#2)}
\newcommand{\mfdsizeint}[2]{\textsf{mfd}_{\mathbb{Z}}(#1,#2)}
\newcommand{\mfdsizepos}[2]{\textsf{mfd}_{\mathbb{N}}(#1,#2)}
\newcommand{\mfdsizey}[2]{\textsf{mfd}_{\mathbb{Y}}(#1,#2)}
\newcommand{\width}[1]{\textsf{width}(#1)}
\newcommand{\mcfc}[1]{\textsf{mccc}(#1)}
\newcommand{\mcfcS}[1]{\ensuremath\textsf{mccc}_S(#1)}
\newcommand{\mcfcruntime}{\ensuremath{O(n\log m(m+n\log n))}}
\newcommand{\kfwapos}{$k$-FWA$_{\mathbb{N}}$\xspace}
\newcommand{\kfwaint}{$k$-FWA$_{\mathbb{Z}}$\xspace}
\newcommand{\flowcost}[1]{\Tilde{c}(#1)}
\theoremstyle{definition}
\Crefname{property}{Property}{Properties}
\title{Width Helps and Hinders Splitting Flows}
\author{Manuel C\'{a}ceres}{Department of Computer Science, University of Helsinki, Finland}{manuel.caceresreyes@helsinki.fi}{https://orcid.org/0000-0003-0235-6951}{}
\author{Massimo Cairo}{Department of Computer Science, University of Helsinki, Finland}{cairomassimo@gmail.com}{}{}
\author{Andreas Grigorjew}{Department of Computer Science, University of Helsinki, Finland}{andreas.grigorjew@helsinki.fi}{https://orcid.org/0000-0003-0989-2415}{}
\author{Shahbaz Khan}{Department of Computer Science and Engineering, Indian Institute of Technology Roorkee, India}{shahbaz.khan@cs.iitr.ac.in}{ https://orcid.org/0000-0001-9352-0088}{}
\author{Brendan Mumey}{School of Computing, Montana State University, United States}{brendan.mumey@montana.edu}{https://orcid.org/0000-0001-7151-2124}{}
\author{Romeo Rizzi}{Department of Computer Science,  University of Verona, Italy}{romeo.rizzi@univr.it}{https://orcid.org/0000-0002-2387-0952}{}
\author{Alexandru I. Tomescu}{Department of Computer Science, University of Helsinki, Finland}{alexandru.tomescu@helsinki.fi}{https://orcid.org/0000-0002-5747-8350}{}
\author{Lucia Williams}{School of Computing, Montana State University, United States}{lgw2@uw.edu}{https://orcid.org/0000-0003-3785-0247}{}
\authorrunning{M. C\'{a}ceres et al.} 
\keywords{Flow decomposition, approximation algorithms, graph width} 
\begin{document}

\maketitle

\begin{abstract}
Minimum flow decomposition (MFD) is the NP-hard problem of finding a smallest decomposition of a network flow/circulation $X$ on a directed graph $G$ into weighted source-to-sink paths whose superposition equals $X$. 
We show that, for acyclic graphs, considering the \emph{width} of the graph
(the minimum number of paths needed to cover all of its edges)
yields advances in our understanding of its approximability.
For the version of the problem that uses only non-negative weights,
we identify and characterise a new class of \emph{width-stable} graphs,
for which a popular heuristic is a \gwsimple-approximation ($|X|$ being the total flow of $X$),
and strengthen its worst-case approximation ratio from $\Omega(\sqrt{m})$ to $\Omega(m / \log m)$ for sparse graphs, where $m$
is the number of edges in the graph.
We also study a new problem on graphs with cycles, Minimum Cost Circulation Decomposition (MCCD), and show that it generalises MFD through a simple reduction. For the version allowing also negative weights, we give a $(\lceil \log \Vert X \Vert \rceil +1)$-approximation ($\Vert X \Vert$
being the maximum absolute value of $X$ on any edge) using a power-of-two approach, combined with parity fixing arguments and a decomposition of unitary circulations ($\Vert X \Vert \leq 1$), using a generalised notion of width for this problem.
Finally, we disprove a
conjecture about the linear independence of minimum (non-negative) flow decompositions posed by Kloster et al.
\cite{kloster2018practical}, but show that its useful implication (polynomial-time assignments of weights to a given
set of paths to decompose a flow) holds for the negative version.
\end{abstract}



   

\section{Introduction}

Minimum flow decomposition (MFD) is the problem of finding a smallest sized decomposition 
of a network flow $X$ on  directed graph $G=(V,E)$ into weighted source-to-sink
paths whose superposition equals $X$.
We focus on the case where path weights are restricted to be integers.
It is a textbook result~\cite{ahujia1993network} that if $G$ is acyclic (a DAG) a decomposition using no more than $m = |E|$ paths
always exists.
However, MFD is strongly NP-hard~\cite{vatinlen2008simple}, even on DAGs, and even when the flow values come
only from $\{1, 2, 4\}$ \cite{hartman2012split}. 
Recent work has shown that the problem is FPT in the size of the
minimum decomposition~\cite{kloster2018practical}
and that it can be formulated as an ILP of quadratic size~\cite{Dias:2022uv}.

While difficult to solve, MFD is a key step in many applications. For example, MFD on DAGs is used to reconstruct
biological sequences such as RNA transcripts~\cite{pertea2015stringtie,tomescu2013novel,gatter2019ryuto,bernard2014efficient,tomescu2015explaining,williams2019rna,dias2023safety} and
viral strains~\cite{baaijens2020strain}. MFD can also be used to model problems in
networking~\cite{vatinlen2008simple, hartman2012split,mumey2015parity} and transportation planning~\cite{Olsen:2020aa},
although in some of these applications there may be cycles in the input.
Despite the ubiquity of the MFD problem, the gap in our knowledge about the approximability of MFD is large.
It is known~\cite{hartman2012split} that MFD (even on DAGs) is APX-hard 
(i.e., there is some $\epsilon > 0$ such that it is NP-hard to
approximate within a $(1+\epsilon)$ factor), so in particular,
MFD does not admit a PTAS, unless P=NP.
Furthermore, the best known approximation ratio is $\lambda^{\log \Vert X \Vert} \log \Vert X \Vert$~\cite{mumey2015parity},
where $\lambda$ is the length of the longest source-to-sink path and $\Vert X \Vert$ is the
largest flow value in the network.
In this work, we attempt to fill in some of
the gaps 
between these results. 


A natural lower bound for the size of an MFD of a DAG is the size of a \emph{minimum path cover} of the set of edges with non-zero flow (i.e.,~the minimum number of paths such that every such edge appears in \emph{at least} one path)---this size is called the \emph{width} of the network. This trivially holds because every flow decomposition is also such a path cover. These two notions are analogies of the more standard notions of path cover and width of the \emph{node set}.
The node-variants are classical concepts, with algorithmic results dating back to Dilworth and Fulkerson~\cite{dilworth2009decomposition,fulkerson1956note}. Despite this, the width has not been given any attention in the MFD problem, and in particular it has never been used in approximation algorithms to our knowledge. In this paper, we show that the width can play a key role both in the analysis of popular heuristics, and in obtaining the first approximation algorithm for a natural generalisation of MFD.


We start by considering the connections between the width and a popular 
heuristic algorithm for \mfdpos which we call \emph{greedy-weight}\footnote{Previous work has consistently
referred to this algorithm as \emph{greedy-width}. To avoid confusion with the width of the graph,
we introduce the name greedy-weight in this work.}~\cite{vatinlen2008simple},
which builds a flow decomposition by successively choosing the path that
can carry the largest flow.
Greedy-weight is commonly used in applications
(see e.g., \cite{tomescu2013novel,baaijens2020strain,pertea2015stringtie} among many), and it
seems to be mentioned in nearly every publication addressing flow decomposition. 
%
First, on sparse graphs we improve (i.e.,~increase) the worst-case lower bound for the greedy-weight approximation factor from $\Omega(\sqrt{m})$~\cite{hartman2012split}, showing for the first time that greedy-weight can be exponentially worse than the optimum:
\begin{restatable}{theorem}{ratiomfdpos}
\label{thm:gwbad}
The approximation ratio for greedy-weight on \mfdpos~is \gwbadapprox~for sparse graphs, in the worst case.
\end{restatable}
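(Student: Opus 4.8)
The plan is to exhibit, for infinitely many $m$, a \emph{sparse} DAG $G$ (with $m = \Theta(n)$ edges) carrying a flow $X$ such that the minimum flow decomposition of $X$ has size $O(\log m)$, whereas greedy-weight is forced to return $\Omega(m)$ source-to-sink paths; the ratio of these two quantities is then $\Omega(m/\log m)$. In particular this makes greedy-weight exponential in the optimum, since $m = 2^{\Theta(\log m)}$.

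First I would pin down an instance with a \emph{provably} small optimum. The clean way to guarantee a decomposition of size $O(\log m)$ is to build $X$ as a superposition of only $k = \Theta(\log m)$ source-to-sink paths $P_1,\dots,P_k$ with geometrically growing weights $2^0, 2^1, \dots, 2^{k-1}$, so that the total flow — and hence the number of edges needed to realise it — stays polynomial in $m$. These $k$ paths are an explicit witness decomposition, and that no decomposition does better follows either from the width lower bound (every decomposition is an edge path cover, so it has size at least $\width{G}$) or from a direct counting argument on a handful of designated edges of $G$.

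Second, and this is the heart of the argument, I would lay the $k$ heavy paths out so that they overlap along a common ``spine'' of $\Theta(m)$ edges in a shifted, staircase-like pattern, with their weights chosen so that the partial sums appearing on spine edges create a \emph{spurious} source-to-sink path $Q$ in $G$ whose bottleneck strictly exceeds $2^{k-1}$, the weight of the heaviest $P_i$. Greedy-weight must therefore take $Q$ first; the flow values are arranged so that subtracting $w(Q)\cdot Q$ saturates exactly one spine edge and leaves a residual flow of the \emph{same combinatorial type} (one fewer spine edge, the same shifted-overlap structure). One then proves, by induction on the number of surviving spine edges, the invariant that the residual is a valid flow of this shape and that its unique maximum-bottleneck path again clears exactly one spine edge. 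Iterating yields $\Theta(m)$ greedy paths while the optimum stayed $O(\log m)$. Keeping $G$ of bounded degree (so that $m = \Theta(n)$) is a layout constraint that the spine-plus-detour construction satisfies by design.

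The step I expect to be the genuine obstacle is the simultaneous choice of the spine flow values and heavy-path weights that makes the induction of the previous paragraph go through: one must keep the heavy weights geometric so the optimum remains $O(\log m)$; make every residual's maximum-bottleneck path the ``wrong'' (spine-hugging) path rather than one of the $P_i$; and ensure each greedy iteration makes only $O(1)$ progress. Ruling out that greedy ever ``accidentally'' grabs a heavy $P_i$ and short-circuits the process is exactly where the careful arithmetic — a carry/ruler-type pattern on the spine — is needed; the rest (the explicit optimal decomposition, the sparsity bookkeeping, and assembling the final ratio) is routine.
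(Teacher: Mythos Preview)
Your target --- a sparse instance with optimum $O(\log m)$ on which greedy-weight outputs $\Omega(m)$ paths --- is the paper's too, but the mechanism you sketch differs substantially, and the step you flag as the ``genuine obstacle'' is exactly what the paper sidesteps rather than solves. In the paper's construction greedy does not grind through $\Theta(m)$ iterations each saturating one spine edge; a \emph{single} first step is catastrophic. The instance $(G_\ell,X_{\ell,2^{\ell+1}})$ is built recursively, gluing two copies of $G_{\ell-1}$ through a \emph{central} edge of flow exactly $2^{\ell+1}$, and the backbone path through all the central edges is the heaviest $s$--$t$ path. Removing it with weight $2^{\ell+1}$ saturates every central edge at once, fragmenting the residual into pieces that together contain $2^{\ell+1}$ pairwise-unreachable weight-$1$ edges; greedy is then forced to spend $2^{\ell+1}=\Theta(m)$ further weight-$1$ paths with nothing left to verify per step. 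The $2\ell+2=O(\log m)$ optimum instead routes its paths \emph{through} the central edges rather than saturating them. The paper explicitly names this one-shot width explosion as the idea that lifts the bound beyond the prior $\Omega(\sqrt{m})$.

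Your incremental plan has a quantitative tension you do not resolve. With optimal weights $2^0,\dots,2^{k-1}$ the total flow is $2^k-1$; since you want $g_1>2^{k-1}$, the residual after step one carries flow below $2^{k-1}$, so to still extract $\Theta(2^k)=\Theta(m)$ further positive-integer greedy paths virtually all subsequent $g_i$ must equal $1$. Hence the residual after step one is \emph{not} of the ``same combinatorial type'' in any sense that lets you re-apply the same hypothesis: its heaviest path carries $O(1)$ flow, not a constant fraction. Whatever ruler/carry arithmetic would reconcile this is the entire proof, not a bookkeeping detail, and it is left open in your sketch; and if you do force the residual after one step to have $\Theta(m)$ unit-bottleneck edges on pairwise-disjoint paths, you have essentially arrived at the paper's width-explosion mechanism by another route.
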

For this we use a class of sparse graphs where the optimum flow decomposition has size $O(\log m)$ whereas the greedy-weight algorithm returns
a solution of size $\Omega(m)$, only a constant factor away from the trivial decomposition. 
The key to this new bound is to design an input where the width
increases exponentially when a path is greedily removed. We also show that the same bound also holds for other greedy heuristics choosing instead the longest or shortest paths. 
%
Second, we identify a new class of graphs, defined by the property that their width does not increase as source-to-sink paths are removed (see~\Cref{prop:width} of width-stable graphs). We show a relation of width-stable graphs to funnels: precisely, a graph is not width-stable if it contains a funnel subgraph and a certain \textit{central} path. This is precisely the structure of the class of sparse graphs improving the approximation ratio of greedy-weight in~\Cref{thm:gwbad}. We also show that width-stability enables greedy-weight to remove paths of large enough flow~(\Cref{lem:enough-flow}), leading to the following result, with $|X|$ being equal to the total flow of the graph: 
\begin{restatable}{theorem}{gwapprox}
\label{thm:gwapproximation}
Let $G = (V, E)$ be a width-stable graph and $X: E \to \mathbb{N}$ a flow.
Greedy-weight is a \gwsimple -approximation for \mfdpos on $(G,X)$.
\end{restatable}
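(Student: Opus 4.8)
The plan is to combine the structural statement of \Cref{lem:enough-flow} with a standard geometric‑decay argument on the total flow, using the fact recalled in the introduction that the width of a network is a lower bound on the size of its MFD.

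Let $w=\width{G}$ denote the width of $(G,X)$, i.e.\ the minimum number of source‑to‑sink paths needed to cover the edges carrying non‑zero flow, and let $X_0=X,X_1,X_2,\dots$ be the sequence of residual flows produced by greedy‑weight: $X_{i+1}$ is obtained from $X_i$ by subtracting, along the $(i{+}1)$‑st path $P_i$ chosen by the heuristic, its bottleneck value $c_i=\min_{e\in P_i}X_i(e)$, and the algorithm stops with $k$ paths at the first index $k$ with $|X_k|=0$ (which is well defined since each step deletes at least the bottleneck edge, so at most $m$ steps occur). Since $G$ is width‑stable, \Cref{lem:enough-flow} guarantees that throughout the execution there is, at step $i$, a source‑to‑sink path carrying weight at least $|X_i|/w$; as greedy‑weight always picks a maximum‑weight path, $c_i\ge |X_i|/w$. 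The one point that needs care is that this bound is against the \emph{original} width $w$ and not one that could have grown: the support of $X_{i+1}$ arises from that of $X_i$ by deleting the edges of (a subpath of) a source‑to‑sink path, and width‑stability is precisely what keeps the relevant width at most $w$ all along — this is exactly what \Cref{lem:enough-flow} packages.

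From $c_i\ge |X_i|/w$ we get $|X_{i+1}|=|X_i|-c_i\le |X_i|(1-1/w)$, hence $|X_i|\le |X|(1-1/w)^i$ for every $i$. Using $(1-1/w)^{w}\le e^{-1}$, after $i=\lceil w\ln|X|\rceil+1$ steps we have $|X_i|\le (1-1/w)<1$; since in \mfdpos every $X_i$ is a non‑negative integer flow, $|X_i|$ is a non‑negative integer, so $|X_i|<1$ forces $|X_i|=0$, and on the DAG $G$ this means $X_i\equiv 0$. Therefore greedy‑weight terminates after $k\le\lceil w\ln|X|\rceil+1=O(w\log|X|)$ iterations, i.e.\ it outputs $O(w\log|X|)$ paths. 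On the other hand, every flow decomposition of $X$ is in particular a cover of the non‑zero‑flow edges by source‑to‑sink paths, so $\mfdsizepos{G}{X}\ge w$. Combining, greedy‑weight returns a decomposition of size $O(w\log|X|)=O\!\left(\mfdsizepos{G}{X}\cdot\log|X|\right)$, i.e.\ a \gwsimple‑approximation on $(G,X)$.

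The only real obstacle here is already behind us: it is \Cref{lem:enough-flow} (resting on the characterisation of width‑stable graphs) that does the work of showing greedy‑weight can always peel off a $1/w$‑fraction of the remaining flow, and that this property persists as edges are deleted along chosen paths; given that, the argument above is the routine ``geometric series'' estimate, the only subtlety being the integrality of $|X_i|$, which is used to turn $|X_i|<1$ into actual termination.
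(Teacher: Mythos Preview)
Your proposal is correct and follows essentially the same approach as the paper: invoke \Cref{lem:enough-flow} together with width-stability to ensure each greedy step removes at least a $1/w$-fraction of the remaining total flow, deduce geometric decay and termination by integrality after $O(w\log|X|)$ steps, and compare against the lower bound $\mfdsizepos{G}{X}\ge w$. The only cosmetic difference is that the paper records the step count as $\lfloor \log|X|/\log\tfrac{w}{w-1}\rfloor+1$ (packaged as \Cref{lem:gw-paths}) rather than your $\lceil w\ln|X|\rceil+1$, but these are asymptotically equivalent and the underlying argument is identical.
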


A notable example of width-stable graphs is the class of \emph{series-parallel graphs}; see~\cite{EPPSTEIN199241,tarjan-series-parallel} for fast recognition algorithms and pointers to other NP-hard problems that are easier on this class of graphs.
Series-parallel graphs are also of great interest for network flow problems (see, e.g.,~\cite{series-parallel-min-cost-flow,robust-flow-networks}). \Cref{thm:gwbad,thm:gwapproximation} show that greedy-weight's approximation ratio is highly linked to the width stability of the graph.
%

In \Cref{sec:mdfd} we continue with a generalised version of MFD, \emph{Minimum Cost Circulation Decomposition (MCCD)}, on directed graphs with cycles and no sinks or sources, and a cost function on the edges. Instead of decomposing a flow into weighted paths, we decompose a circulation into \emph{weighted circulations} and minimise the total cost of the circulations, and instead of the width, a natural lower bound for this problem is the minimum cost of a circulation cover (\textsf{mccc}). Decomposing into circulations rather than paths is a natural generalisation, as paths can be considered as value $1$ flows themselves.
Additionally, we also consider a relaxation in which the flow/circulation decomposition might use negative integer weights on flows/circulations, rather than strictly positive weights as has traditionally been considered~\cite{vatinlen2008simple,hartman2012split,kloster2018practical}.
An important observation that we leverage for this variant (unlike the positive-only version)
is that the \textsf{width}/\textsf{mccc} stays constant as flow is chosen and removed. Using this,
we give a $(\lceil \log \Vert X \Vert \rceil +1)$-approximation algorithm for this variant.


We denote all versions by \mccdpos~and \mccdint~as well as \mfdpos~and \mfdint~throughout the paper. 
While \mccdy~and \mfdint~are natural versions of the problem, they have not been previously considered in the MFD literature to our knowledge. However, \mfdint~can also have natural applications, since by applying \mfdint on the difference between two flows, one can minimally explain the \emph{differences} between them, e.g. to explain the differences in RNA expression between two tissue samples with the fewest number of up/down regulated transcripts, which is often the goal of RNA sequencing experiments \cite{teng2016benchmark}.
Our approximation follows a \emph{power-of-two} approach
where the weights of the flows/circulations chosen are (positive or negative) powers of two.
More specifically, observe that if all circulation values are even, then one can divide them by 2 and obtain a circulation $X$ with smaller $\Vert X \Vert$ whose decomposition can be transformed back into a decomposition of $X$. In order to obtain such an even circulation, we prove a basic property that can be of independent interest: given any integer circulation $X$, there exists a \emph{unitary} circulation (its values are $0$, $+1$, or $-1$) $Y$, such that $X+Y$ is even on every edge (\Cref{lem:parity-unitary}). In addition, given a unitary circulation $Y$, we show that $Y$ can be decomposed into circulations of total cost no more than \textsf{mccc} (\Cref{lem:unitary-to-difference}). We obtain the $(\lceil \log \Vert X \Vert \rceil +1)$-approximation ratio (\Cref{thm:mdfd-approx}) by iteratively removing the unitary circulation, dividing all circulation values by 2, and preprocessing the graph so that the \textsf{mccc} is a lower bound on the size of the \mccdint. Summarised, we show: 
\begin{restatable}{theorem}{mdfdapprox}
\mccdint~can be approximated with a factor of $\log \lceil \Vert X \Vert \rceil + 1$ in runtime $O(n\log m(m+n\log n)+m\log\Vert X\Vert)$.
\label{thm:mdfd-approx}
\end{restatable}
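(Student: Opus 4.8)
The plan is a \emph{power-of-two} recursion on $\Vert X\Vert$ that charges the whole decomposition against \mcfc{G}, after a preliminary step that makes \mcfc{G} a lower bound on $\mathrm{OPT}$, the optimum of \mccdint on $(G,X)$. As preprocessing, discard every edge $e$ with $X(e)=0$, so that \mcfc{G} henceforth denotes the minimum cost of a circulation cover of exactly $\operatorname{supp}(X)$: since in any feasible decomposition $\sum_i w_i C_i = X$ (with all $w_i\neq 0$) each edge of $\operatorname{supp}(X)$ must be used by some $C_i$, the circulations $\{C_i\}$ already form a circulation cover, so $\mcfc{G}\le\mathrm{OPT}$. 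Then compute once \mcfc{G} together with an optimal circulation cover of $G$, in time \mcfcruntime.

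Now recurse on $\Vert X\Vert$. If $\Vert X\Vert\le 1$, then $X$ is unitary and \Cref{lem:unitary-to-difference} yields a decomposition of cost at most $\mcfc{G}=(\lceil\log\Vert X\Vert\rceil+1)\,\mcfc{G}$ (the log term vanishing). Otherwise, apply \Cref{lem:parity-unitary} to obtain a unitary circulation $Y$ with $X-Y$ even on every edge, and set $X' := (X-Y)/2$, an integer circulation with $\Vert X'\Vert\le\lceil\Vert X\Vert/2\rceil$, hence $\lceil\log\Vert X'\Vert\rceil\le\lceil\log\Vert X\Vert\rceil-1$. Recursively decompose $X'$ as $\sum_i w_i C_i = X'$; then $\sum_i (2w_i) C_i = 2X' = X-Y$ reuses the same circulations $C_i$ and therefore costs the same as the decomposition of $X'$. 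Decompose $Y$ via \Cref{lem:unitary-to-difference} at cost at most \mcfc{G}. Since $(X-Y)+Y = X$, concatenating the two families decomposes $X$ at cost at most $(\lceil\log\Vert X'\Vert\rceil+1)\,\mcfc{G}+\mcfc{G}\le(\lceil\log\Vert X\Vert\rceil+1)\,\mcfc{G}\le(\lceil\log\Vert X\Vert\rceil+1)\,\mathrm{OPT}$, the claimed ratio. (We use that supports never grow along the recursion --- $\operatorname{supp}(X')\subseteq\operatorname{supp}(X)$ and $\operatorname{supp}(Y)\subseteq\operatorname{supp}(X)$ --- so each extracted $Y$ can be decomposed against the \emph{original} \mcfc{G}, letting the geometric sum telescope.)

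For the running time, \mcfc{G} and an optimal cover are computed once in \mcfcruntime; the recursion has $\lceil\log\Vert X\Vert\rceil$ levels, each running in $O(m)$ time (parity fixing via \Cref{lem:parity-unitary}, dividing all values by $2$, and producing the decomposition of $Y$ from the precomputed cover via \Cref{lem:unitary-to-difference}), so the recursion contributes $O(m\log\Vert X\Vert)$ and the total is $O(n\log m(m+n\log n)+m\log\Vert X\Vert)$.

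Given \Cref{lem:parity-unitary,lem:unitary-to-difference}, the remaining work is essentially careful bookkeeping --- the signs, the halving/doubling, and the telescoping cost accounting (including the routine identity $\lceil\log\lceil N/2\rceil\rceil\le\lceil\log N\rceil-1$). The step I expect to need the most care is the preprocessing that makes \mcfc{G} a genuine lower bound for \mccdint: for \mccdpos (and in the path/width setting of \mfdpos) this holds for free, but negative weights let a decomposition cancel flow on edges outside $\operatorname{supp}(X)$, so one has to pass to the support-restricted instance and check that every feasible decomposition still induces a circulation cover of cost at least \mcfc{G}. The related point --- pinning \Cref{lem:unitary-to-difference}'s cost bound to the \emph{original} \mcfc{G} at every recursive level --- is exactly what the ``supports never grow'' observation provides, and it is what yields the factor $\lceil\log\Vert X\Vert\rceil+1$.
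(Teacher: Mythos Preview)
Your overall strategy---the power-of-two recursion via \Cref{lem:parity-unitary} and the per-level cost bound via \Cref{lem:unitary-to-difference}, charged against a single precomputed cover---is exactly the paper's. The genuine gap is in the preprocessing. Discarding every edge with $X(e)=0$ does \emph{not} make $\mcfc{G}$ a lower bound on $\mathrm{OPT}$: an optimal decomposition $X=\sum_i w_i C_i$ lives on the \emph{original} graph, and the non-negative circulations $C_i$ may (and in general must) traverse edges outside $\operatorname{supp}(X)$, cancelling there because the $w_i$ have mixed signs. Hence $\sum_i C_i$ is a circulation cover of $S=\operatorname{supp}(X)$ in $G$, which bounds $\mcfcS{G}$, not $\mcfc{G|_{\operatorname{supp}(X)}}$. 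These can differ arbitrarily: take two parallel edges $e_1,e_2$ from $a$ to $b$ and one edge $f$ from $b$ to $a$, with $X(e_1)=1$, $X(e_2)=-1$, $X(f)=0$. Then $G|_{\operatorname{supp}(X)}$ admits no non-zero non-negative circulation at all, so \Cref{lem:unitary-to-difference} cannot even be applied there, while $\mathrm{OPT}$ and $\mcfcS{G}$ are finite. Your sentence ``the circulations $\{C_i\}$ already form a circulation cover, so $\mcfc{G}\le\mathrm{OPT}$'' silently switches the ambient graph mid-argument.

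The paper's fix is not to restrict to $\operatorname{supp}(X)$ but to first compute, in the full graph $G$, a minimum-cost circulation cover $C$ of $S=\operatorname{supp}(X)$ (\Cref{lem:min-cost-flow-cover}), and then delete only the edges with $C(e)=0$, obtaining $G'$. Now $C$ itself witnesses that $\mcfc{G'}$ is finite, and since every circulation cover of $G'$ is also a cover of $S$ in $G$ while $C$ is a cover of $G'$, one gets $\mcfc{G'}=\mcfcS{G}\le\mathrm{OPT}$. Because $\operatorname{supp}(X)\subseteq E(G')$, all the unitary circulations produced by the recursion live in $G'$, so \Cref{lem:unitary-to-difference} can be invoked on $G'$ at every level; from there your telescoping and running-time analysis go through unchanged. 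You correctly flagged the preprocessing as the delicate step, but ``pass to the support-restricted instance'' is precisely the move that fails.
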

By~\Cref{lem:mcfdint-to-mfdint} we additionally obtain the result for \mfdint. Notably, the runtime of the algorithm does not depend on the cost function.

Finally, in \Cref{sec:k-flow-weight-assignment} we consider a closely related problem, called \emph{$k$-Flow Weight Assignment}~\cite{kloster2018practical}. In addition to the flow $X$, in this problem we are also given a set of $k$ paths, and we need to decide if there is an assignment of weights to the paths such that they form a decomposition of $X$. If the weights belong to $\mathbb{N}$, this was shown to be NP-complete in~\cite{kloster2018practical}.
In this work, we first observe that in the same way that allowing negative integer weights simplifies the approximability of MFD, allowing weights to belong to $\mathbb{Z}$ fully changes the complexity of the $k$-Flow Weight Assignment Problem, making it polynomial.
This is due to the fact that the linear system defined by the given paths loses its only inequality of restricting the weights to positive integers. It thus transforms an ILP to a system of linear diophantine equations, which can be solved in polynomial time (see e.g.~\cite{Shrij1986linearprog}). Second, we consider a conjecture from~\cite{kloster2018practical} stating that if the weights belong to $\mathbb{N}$, and $k$ is the size of a \mfdpos for $X$, then the problem admits a unique solution (i.e., a unique assignment of weights to the given paths). 
If true, this would speed up the FPT algorithm of~\cite{kloster2018practical} for \mfdpos, because
a step solving an ILP could be executed by solving a standard linear program returning a rational solution and
checking if the (supposedly unique) solution to this system is integer.
Moreover, the same conjecture (with the same implication) was also a motivation behind the greedy algorithm of~\cite{shao2017theory} for \mfdpos. 
In this paper, we disprove the conjecture of~\cite{kloster2018practical}, further corroborating the gap between \mfdpos and \mfdint.

\section{Preliminaries}

In \Cref{sec:greedy,sec:k-flow-weight-assignment} we are given a directed acyclic graph $G=(V,E)$. Without loss of generality, we assume a unique source $s$ and a
unique sink $t$ with no in-edges and no out-edges respectively; otherwise, the graph can be
converted to such a graph by adding a pseudo source and sink and connecting them to all sources and sinks
respectively. 
We denote by $\deg^+(v)$ and $\deg^-(v)$ the out- and indegree of a vertex $v$, respectively.
While Minimum Flow Decompositions are also studied for graphs with cycles (see,
e.g.,~\cite{vatinlen2008simple,hartman2012split}), the task is still to decompose into simple paths, and
so our inapproximability result on DAGs in \Cref{sec:greedy} also applies for graphs with cycles.
In \Cref{sec:mdfd} we consider directed graphs $G = (V,E,c)$ with no sources or sinks, where $c:E \to \mathbb{R}_{\geq0}$ is a cost function. Such graphs can not be acyclic.
We use $n$ and $m$ to denote the number of nodes and edges of $G$, respectively. 
For both kinds of graphs, we call functions $X\colon E\to\mathbb{Y}$ \emph{pseudo-flows},\footnote{Commonly in the literature, (pseudo-)flows are additionally required to be skew-symmetric and to be upper-bounded by some capacity function on the edges.}
where $\mathbb{Y}$ is some set of allowed flow values (numbers). 
We treat pseudo-flows as vectors over $E$ and use the notation $X + Y$ and $aX$ to denote the (element-wise) sum of pseudo-flows and multiplication by a scalar, respectively.
The numbers $0$ and $1$ also denote (depending on context) pseudo-flows that are $0$ (resp.~$1$) on every edge. 
We write $X \leq Y$ (and similarly $<$) to mean $X(u,v) \leq Y(u,v)$ for every $(u,v)\in E$.

Given a DAG $G$, a \emph{flow} is  a pseudo-flow
satisfying conservation of flow (incoming flow equal to outgoing flow) on internal nodes $V \setminus \{s,t\}$. A pseudo-flow satisfying the conservation of flow on all nodes is called a \emph{circulation}. We sometimes refer to the value $X(e)$ for a flow/circulation $X$ as the flow of the edge $e$.
It is known that the sum of two flows/circulations $X + Y$, the multiplication of a flow/circulation with by a scalar $aX$,
and the empty pseudo-flow $0$ are themselves flows/circulations. 
Let $|X|$ denote the total flow out of $s$ (or by flow conservation, equivalently into $t$) for a flow $X$. Note that $|X|$ can be negative.
Given an $s$-$t$ path $P$, denote by $P[u..v]$ the subpath of $P$ going from $u\in V$ to $v\in V$ and let $P$ also denote the flow defined by setting $1$ to every edge in $P$ and $0$ to every other edge. With these definitions, we are ready to formally define MFD.

\begin{definition}
Given a flow $X$, a \emph{flow decomposition} of $(G,X)$ of \emph{size} $k$ is a family of $s$-$t$ paths
$\mathcal{P}=(P_{1},\dots,P_{k})$ with weights $(w_{1},\dots,w_{k})\in\mathbb{Y}^k$ such that
$X=w_{1}P_{1}+\dots+w_{k}P_{k}$.
\end{definition}


\begin{definition}
Given a flow $X$,
let $\mfdsizey{G}{X}$ be the smallest size of a flow decomposition of $(G,X)$ with weights in $\mathbb{Y}$.
\end{definition}

We omit $\mathbb{Y}$ if it is clear from the context. We call the problem of producing a flow decomposition of $(G,X)$ of minimum size 
the \emph{minimum flow decomposition (MFD) problem}.

\begin{figure}
    \centering
    \includegraphics[width=0.3\textwidth]{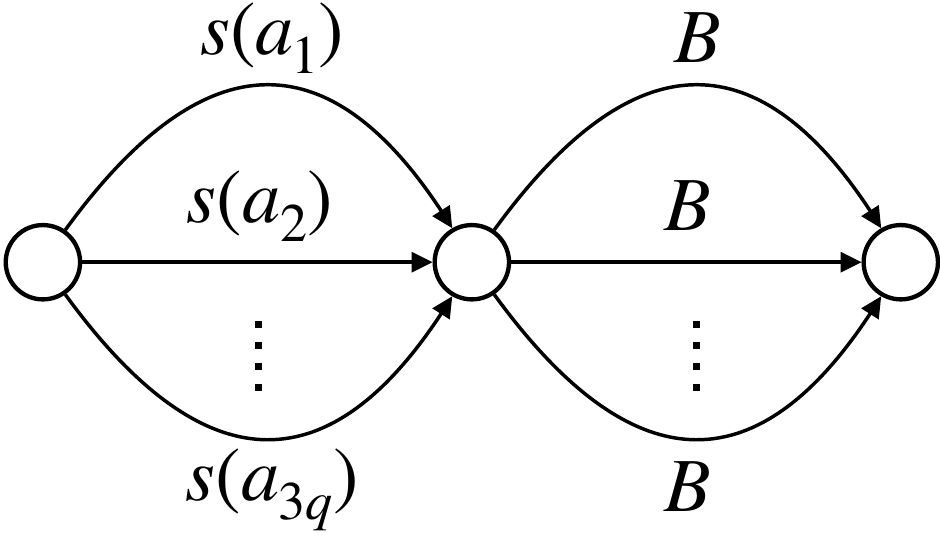}
    \caption{The reduction of $3$-Partition to MFD from~\cite{vatinlen2008simple}. The $3$-Partition instance consists of a set $A = \{a_1,\dots,a_{3q}\}$, where every $a_i$ has a positive integer \emph{size} $s(a_i)$, and a positive integer $B$, such that $B/4 < s(a_i) < B/2$ holds for every $a_i \in A$. The question is whether $A$ can be partitioned into $q$ disjoint subsets, each of $3$ elements and of size $B$. The MFD series-parallel (see \Cref{def:ser-par}) reduction consists of a subgraph obtained by the \emph{parallel composition} of $3q$ edges with flow values $s(a_1),\dots,s(a_{3q})$, and a subgraph obtained by the parallel composition of $q$ edges, each with flow value $B$. These two graphs are composed with the \emph{series composition}. Intuitively, because $B/4 < s(a_i) < B/2$ holds for every $a_i$, the MFD consists of exactly $3q$ paths of weights $s(a_1),\dots,s(a_{3q})$, and each edge on the right-hand subgraph is traversed by exactly three paths whose weights sum to $B$, giving thus the partition of $A$. Moreover, since the first $3q$ edges need to be decomposed, the previous decomposition is minimum even if negative weights are allowed, making \mfdint NP-hard.
    }
    \label{figure:np-hardness-reduction}
\end{figure}

In \Cref{sec:greedy} we study \mfdpos ($0 \in \mathbb{N}$), and in \Cref{sec:mdfd} we study \mfdint and its generalisation \mccdint.
Note that the reduction showing \mfdpos to be strongly NP-hard from \cite{vatinlen2008simple}
also holds for \mfdint (see~\Cref{figure:np-hardness-reduction}). However, a flow with non-negative values may admit a decomposition using
fewer paths if negative weights are allowed, as shown in~\Cref{fig:more-pos-paths}. We explore 
further differences between \mfdpos and \mfdint in \Cref{sec:mdfd,sec:k-flow-weight-assignment}.


\begin{figure}
  \begin{subfigure}[t]{.45\linewidth}
    \centering\includegraphics[width=.5\linewidth]{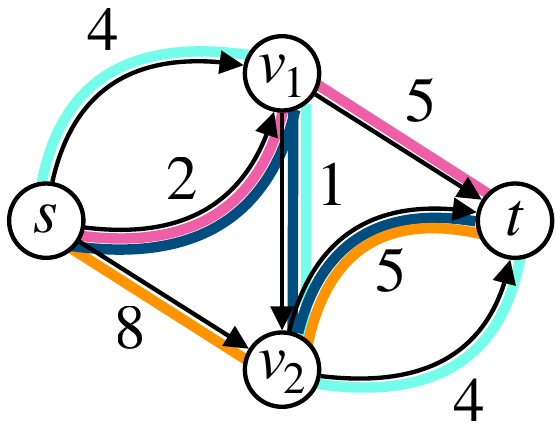}
    \caption{If negative weights are allowed, the four paths decompose the flow with weights $4,5,8,$ and $-3$ (dark blue).}
    \label{fig:more_pos_paths:neg}
  \end{subfigure}
  \hfill
 \begin{subfigure}[t]{.45\linewidth}
    \centering\includegraphics[width=.5\linewidth]{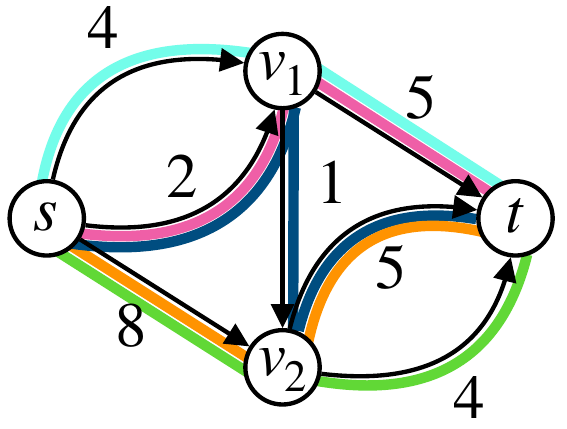}
    \caption{With positive weights only, five paths are needed, since the edge $(v_1, v_2)$ must
    be decomposed by a weight $1$ path, leaving 4 edges that must be covered separately. The paths shown are one such decomposition.}
    \label{fig:more_pos_paths:pos}
  \end{subfigure}
\caption{A positive flow admitting a decomposition into four paths only if negative weights are allowed.}
\label{fig:more-pos-paths}
\end{figure}



Let $\Vert X\Vert=\max_{(u,v)\in E}|X(u,v)|$ denote the infinity norm on flows or circulations.
In particular, notice that if $\mathbb{Y} \subseteq \mathbb{Z}$,
then $\Vert X \Vert \leq 1$ means that $X(u,v)\in \{0,\pm 1\}$ for every $(u,v)\in E$.
Let $X\equiv_{2}Y$ iff $X$ and $Y$ have the same parity everywhere, i.e., for every $(u,v)\in E$, we have that $X(u,v)$ is odd iff $Y(u,v)$ is odd.

\begin{definition}
Given $S \subseteq E$, we define $\textsf{width}_S(G)$ as the minimum number of $s$-$t$ paths in a DAG $G$ needed to cover all edges of $S$. If $S = E$ we just write $\width{G}$.
\end{definition}

The width is the main combinatorial tool that we use for our approximation results, and we will show in \Cref{sec:greedy} that it is highly linked to the approximation performance of greedy-weight. 
Just like its more common node variant, $\width{G}$ can be computed
in $O(mn)$ time. As described by, e.g., \cite{ahujia1993network,Caceres:2021wi}, 
this is done by reduction to a min-flow instance with demand one on every edge; 
the minimum flow of this instance is $\width{G}$, and the flow can be found by reduction
to a max-flow instance.
Moreover, the problem can be relaxed to only require the coverage of $S\subseteq E$ and solved in the same running time by setting the demands only on the edges of $S$.

\begin{lemma}[\cite{ahujia1993network,max-flow-orlin}]
\label{lem:min-flow-cover}
Let $G = (V, E)$ be a DAG, and $S\subseteq E$. A flow $C: E \to \mathbb{N}$ can be computed in $O(mn)$ time, such that $C(e)\geq1$ for all $e\in S$ and $|C|=\textsf{width}_S(G)$.
\end{lemma}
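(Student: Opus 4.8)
The plan is to express $\textsf{width}_S(G)$ as the optimum of a minimum-value $s$-$t$ flow problem with lower bounds on the edges, and then invoke the classical reduction of minimum flow with lower bounds to maximum flow. Concretely, I would impose a demand $\ell(e) = 1$ on every $e \in S$ and $\ell(e) = 0$ on every $e \in E \setminus S$, leave all upper capacities infinite, and seek an integral $s$-$t$ flow $C$ with $C \geq \ell$ minimising $|C|$. Any such $C$ is non-negative by construction, hence a member of $E \to \mathbb{N}$, and integrality comes for free from the integrality of maximum-flow solutions.

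The first step is the combinatorial identity: the minimum value of an $s$-$t$ flow $C$ with $C \geq \ell$ equals $\textsf{width}_S(G)$. For the ``$\leq$'' direction, take a path cover $\mathcal{P}$ of $S$ of minimum size $\textsf{width}_S(G)$; then $\sum_{P \in \mathcal{P}} P$ is an integral $s$-$t$ flow that is at least $1$ on every edge of $S$ and has value exactly $|\mathcal{P}| = \textsf{width}_S(G)$. For the ``$\geq$'' direction, take an optimal $C$; since $G$ is a DAG, the textbook flow-decomposition theorem~\cite{ahujia1993network} writes $C$ as a sum of exactly $|C|$ source-to-sink paths (counted with multiplicity) and no cycles, and these paths cover every edge on which $C$ is positive, in particular all of $S$, so $\textsf{width}_S(G) \leq |C|$. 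Consequently an optimal flow for this flow problem is precisely the object claimed by the lemma.

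The second step is to compute such a $C$ within the stated time bound. I would apply the standard two-phase reduction (see~\cite{ahujia1993network}): add the return arc $(t,s)$ with infinite capacity, encode the lower bounds $\ell$ as node surpluses and deficits, attach a super-source and super-sink absorbing them, and (phase~1) compute a maximum flow to obtain a feasible flow meeting all demands (phase~1 cannot fail, as the flow from the ``$\leq$'' direction above already witnesses feasibility), then (phase~2) compute a maximum flow in the residual network of this feasible flow to cancel as much flow around the added arc as possible, yielding a minimum-value flow. Each phase is a single maximum-flow computation on a network with $O(n)$ nodes and $O(m)$ arcs, which runs in $O(mn)$ time using Orlin's algorithm~\cite{max-flow-orlin}; integrality of both outputs gives an integral $C$, and restricting to the edges of $G$ yields the desired $C : E \to \mathbb{N}$ with $|C| = \textsf{width}_S(G)$.

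The one point requiring care --- and the step I would double-check most carefully --- is the ``$\geq$'' direction of the identity: that on a DAG every non-negative integral flow of value $|C|$ decomposes into exactly $|C|$ source-to-sink paths with nothing left over, so that a minimum-value feasible flow cannot ``waste'' units and therefore matches the combinatorial width exactly. Everything else is an assembly of standard minimum-flow/maximum-flow machinery together with the cited $O(mn)$ maximum-flow running time.
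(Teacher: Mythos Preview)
Your proposal is correct and matches the paper's approach: the paper does not give a self-contained proof but merely cites the references and sketches exactly the reduction you describe---set unit demands on the edges of $S$, observe that the minimum $s$-$t$ flow meeting those demands has value $\textsf{width}_S(G)$, and compute it via the standard reduction to maximum flow in $O(mn)$ time. Your expanded argument for the identity (both directions) and the two-phase reduction fills in precisely the details the paper leaves to the cited sources.
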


The flow $C$ with total flow $\textsf{width}_S(G)$ suffices and we do not need to calculate a path cover achieving that minimum. However, we note that it can be directly computed given the flow $C$.
We can think of this path cover as a
flow decomposition of $C$ into $\textsf{width}_S(G)$ weight-one paths, which can be
found by greedily removing such paths from $C$ until it is completely decomposed. Since
each path has no more than $n-1$ edges and since $\textsf{width}_S(G) \leq m$, the overall
runtime of finding the path cover is $O(mn)$. Similarly, every path cover $P_1,\dots,P_\ell$ of $G$ defines a flow $C$ on $G$: $C = P_1 + \dots + P_\ell$, and we say that $C$ is the induced flow of the path cover $(P_i)_i^\ell$.

\begin{definition}
    In a directed graph $G = (V,E)$ we call a subset $C \subseteq E$ an \emph{antichain} of $G$ if all edges in $C$ are pairwise parallel, that is there exists for no pair of edges in $C$ a path in $G$ leading from one edge to the other.
\end{definition}
It can be shown with straight forward arguments that $\width{G} = |C|$ for a maximum (sized) antichain $C$ of $G$.

\section{Width matters for greedy approaches} \label{sec:greedy}

Since the difference of two flows is still a flow, it is very natural to consider
successively removing
the simplest type of flow --- that is to say, paths --- as an algorithmic strategy for \mfdpos.
Indeed, the particular greedy path removal strategy of finding a \emph{heaviest path} (\emph{greedy-weight})
is commonly used as a heuristic in applications (e.g.,~\cite{pertea2015stringtie,baaijens2020strain,tomescu2013novel,hartman2012split}) and it
seems to be mentioned in nearly every paper addressing flow decomposition. 
More formally, a path $P$ is said to \emph{carry} flow $p$ if $X(u,v) \geq p$ for all edges $(u,v)$ of $P$ (in particular, a $v$-$v$ path carries infinite flow). A \emph{heaviest path} is an $s$-$t$ path carrying the largest flow. Such a path can be easily found in linear time in the size of the DAG by dynamic programming (see, e.g.,~\cite{vatinlen2008simple}).


\subsection{Width hinders greedy on \mfdpos}\label{sec:greedy-lb}

We define a family of \mfdpos~instances $(G_{\ell},X_{\ell,B})$, depending on two parameters $\ell \in \mathbb{N}\setminus \{0\}$ and $B\in \mathbb{N}$. The family is defined recursively on $\ell$. The base case $(G_1,X_{1,B})$ for $\ell = 1$ is shown in \Cref{fig:greedy:base_case}. For $\ell > 1$, we build $(G_\ell,X_{\ell,B})$ from two disjoint copies of $(G_{\ell-1},X_{\ell-1,B})$, by adding 5 extra edges and flow values as shown in \Cref{fig:greedy:recursion}.  We call the edge connecting the two copies of $G_{\ell-1}$ a \emph{central edge}. Edges whose flow value depends on $B$ are called \emph{backbone} edges, and they form a $s$-$t$ path.
By choosing $B=2^{\ell+1}$, we show that the flow $X_{\ell, 2^{\ell+1}}$ can be decomposed using a number of paths linear in $\ell$, thanks to the heavy backbone edges, whereas the greedy-weight algorithm fully saturates the central edges with its first path and is left with a remaining flow requiring $2^{\ell+1}$ paths to be decomposed.

\begin{figure}[t]
  \begin{subfigure}[t]{.3\linewidth}
    \centering\includegraphics[width=.9\linewidth]{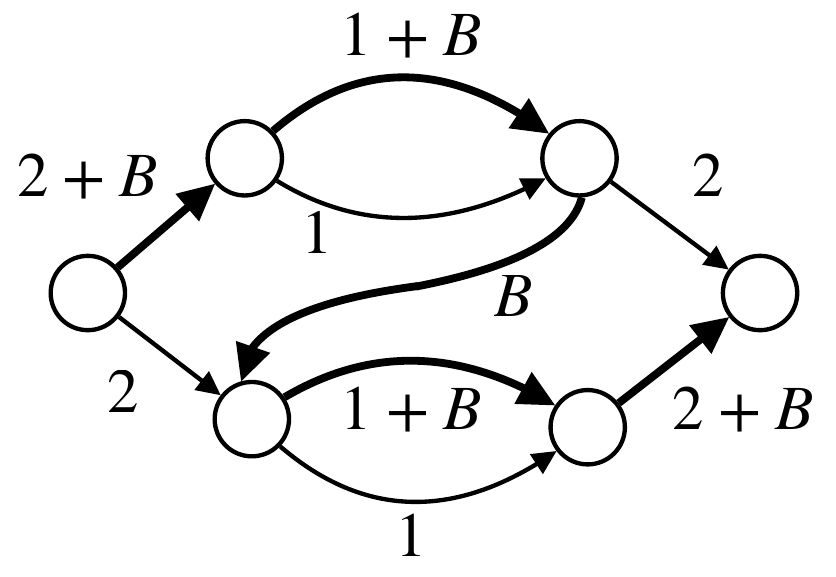}
    \caption{The base case $(G_1,X_{1, B})$ ($\ell = 1$). Backbone edges (bold) carry $B$ flow.}
    \label{fig:greedy:base_case}
  \end{subfigure}
  \hfill
 \begin{subfigure}[t]{.3\linewidth}
    \centering\includegraphics[width=.9\linewidth]{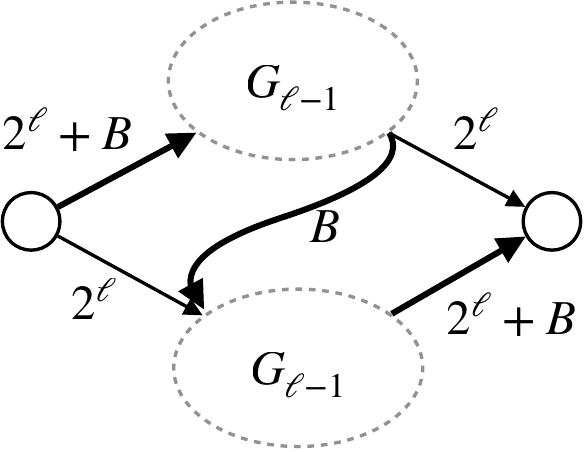}
    \caption{Building $(G_\ell, X_{\ell, B})$ from two copies of $(G_{\ell-1}, X_{\ell-1,B})$ ($\ell > 1$). 
    The $5$ edges entering (resp. leaving) $(G_{\ell-1}, X_{\ell-1,B})$ are defined to enter (resp. leave) the source (resp. sink) node of $(G_{\ell-1}, X_{\ell-1,B})$. The central edge has flow $B$ and is part of the backbone (bold edges).}
    \label{fig:greedy:recursion}
  \end{subfigure}
  \hfill
  \begin{subfigure}[t]{.3\linewidth}
    \centering\includegraphics[width=.9\linewidth]{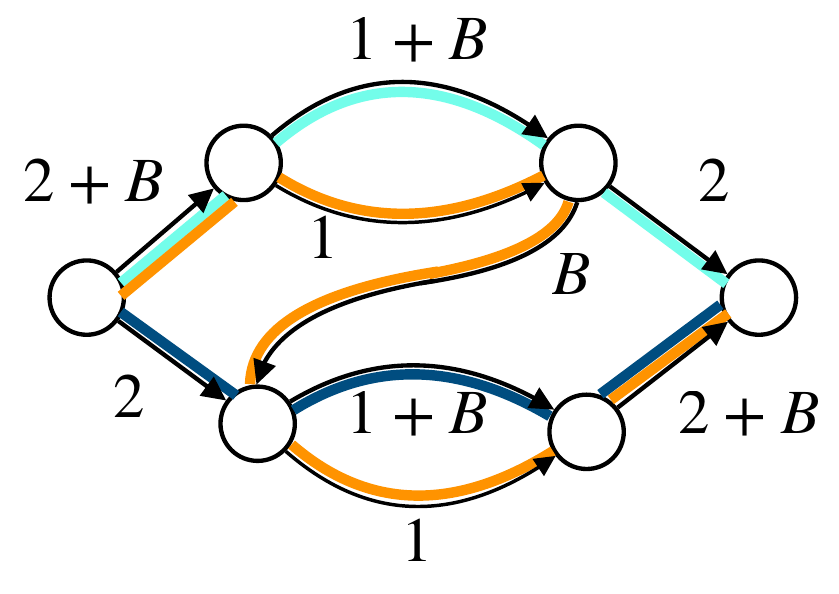}
     \caption{
     Decomposing the base case $(G_1, X_{1, B})$ ($\ell = 1$), for $B = 2^{\ell+1}$. All non-backbone edges can be decomposed with $2\ell + 1=3$ paths. The orange 
     path has weight $1$ and dark and light blue paths have weight $2$. A fourth path (of weight $3$) along the backbone is required
     to fully decompose the flow.
     }
     \label{fig:greedy:base_case_paths}
  \end{subfigure}
\caption{Construction for $(G_\ell, X_{\ell,B})$. Setting $B=2^{\ell+1}$ yields \mfdpos instances where greedy-weight uses
$\Theta(\frac{m}{\log m})$ times more paths than optimal to decompose the flow.} \label{fig:greedy}
\end{figure}

\begin{lemma}
\label{lem:gw-bad-greedy}
Let $G_{\ell}$ with flow $X_{\ell,2^{\ell+1}}$ be constructed as described before. Greedy-weight uses $1+2^{\ell+1}$ paths to decompose
$X_{\ell,2^{\ell+1}}$.
\end{lemma}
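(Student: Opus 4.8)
The plan is to prove, by induction on $\ell$, the following refinement: when greedy-weight is run on $(G_\ell, X_{\ell, 2^{\ell+1}})$, its first path is the backbone $s$-$t$ path $P_{\mathrm{bb}}$, removed with weight $B := 2^{\ell+1}$, and on the flow that remains greedy-weight then uses exactly $2^{\ell+1}$ further paths; adding the two contributions yields $1 + 2^{\ell+1}$.

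For the first claim I would show that $P_{\mathrm{bb}}$ is the unique heaviest path. Every backbone edge carries exactly $B$, so $P_{\mathrm{bb}}$ carries $B$; by inspection of the base case (\Cref{fig:greedy:base_case}) and of the recursive step (\Cref{fig:greedy:recursion}), every non-backbone edge carries a value that does not depend on $B$ and is strictly less than $2^{\ell+1}$, so any $s$-$t$ path other than $P_{\mathrm{bb}}$ uses an edge of flow $< B$ and hence carries $< B$. Thus greedy removes $B\cdot P_{\mathrm{bb}}$, zeroing every backbone edge, in particular the central edge of every recursive level; denote the resulting residual flow by $X'_\ell$ (it is supported only on the non-backbone edges, and by the previous remark it does not depend on $B$).

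It then remains to prove that greedy-weight uses exactly $2^{\ell+1}$ paths on $X'_\ell$. For $\ell = 1$ this is a direct check against \Cref{fig:greedy:base_case}: greedy decomposes $X'_1$ in $4 = 2^{2}$ steps. For $\ell > 1$, the backbone path of $G_\ell$ restricted to either copy of $G_{\ell-1}$ is that copy's own backbone path, and $X_{\ell,B}$ restricted to a copy equals $X_{\ell-1,B}$; hence $X'_\ell$ restricted to a copy equals $X'_{\ell-1}$, on which the induction hypothesis gives $2^{\ell} = 2^{(\ell-1)+1}$ greedy paths. Since the central edge was the only edge joining the two copies and is now $0$, the two copies' residual subgraphs are edge-disjoint and meet only at the global source and sink; consequently every $s$-$t$ path carrying positive flow under $X'_\ell$ lies entirely inside one copy, and removing a path from one copy alters neither the flow nor the set of available paths in the other. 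Therefore greedy's run on $X'_\ell$ is an interleaving of two independent stand-alone greedy runs, one per copy, and uses $2\cdot 2^{\ell} = 2^{\ell+1}$ paths, regardless of the order in which it alternates between the copies.

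I expect the only real work to be the bookkeeping against the concrete flow values of \Cref{fig:greedy:recursion}: one must confirm that (i) the backbone path consists solely of backbone edges and carries flow exactly $B$ everywhere, so that its removal leaves $0$ on every backbone and central edge, and (ii) the flow remaining on each copy (together with the residual of the non-backbone connector edges) is exactly the inductive object $X'_{\ell-1}$, which amounts to checking flow conservation at the source and sink of each copy once the connectors have absorbed the $B$ units the backbone was routing through that copy. A secondary point for the first claim is the uniqueness of the heaviest path, needed to exclude a greedy first choice that bypasses one of the copies; this holds because in the construction the edges of flow $\geq B$ form exactly the single path $P_{\mathrm{bb}}$.
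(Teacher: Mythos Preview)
Your inductive strategy is sound and will work, but it differs from the paper's argument, and your proposal contains a factual error about the construction that would make your planned bookkeeping step (i) fail. It is \emph{not} true that every backbone edge carries exactly $B$: in the recursion (\Cref{fig:greedy:recursion}) the backbone edge from the new source $s$ into the first copy carries the full inflow $|X_{\ell-1,B}| = B + 2^{\ell}$, and symmetrically on the $t$-side; only the central edges (and, at the bottom, the base-case backbone) carry exactly $B$. Hence removing $B\cdot P_{\mathrm{bb}}$ does \emph{not} zero every backbone edge, and $X'_\ell$ is \emph{not} supported only on non-backbone edges. Fortunately your induction does not actually need this. What you need is that the central edge is zeroed (true, so the two copies disconnect), that $X'_\ell$ restricted to a copy equals $X'_{\ell-1}$ (true), and that $X'_\ell$ is independent of $B$ (also true, since every backbone edge carries $B$ plus an $\ell$-dependent constant). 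You should also record that the two connector edges into and out of each copy both retain residual $|X'_{\ell-1}|=2^{\ell}$, so they are never the bottleneck for greedy paths routed through that copy; this is what makes greedy on $X'_\ell$ genuinely an interleaving of the two standalone runs. With these corrections your argument goes through.

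For comparison, the paper avoids induction altogether. After the first greedy step zeroes all central edges, it observes directly that the residual contains $2^{\ell+1}$ weight-$1$ edges (four in each of the $2^{\ell-1}$ copies of $G_1$) that are pairwise non-reachable via non-zero edges; thus at least $2^{\ell+1}$ further paths are needed, and weight-$1$ paths through these edges in fact decompose the residual completely. This is shorter and sidesteps both the $B$-independence observation and the interleaving argument, though your inductive route is perfectly valid once the backbone flow values are stated correctly.
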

\begin{proof}
We first show that the heaviest path in $X_{\ell,2^{\ell+1}}$ follows every backbone edge from $s$
to $t$. Certainly this path carries $2^{\ell+1}$ flow, and no more, since every backbone edge has flow at least $2^{\ell+1}$ and every central edge has flow exactly $2^{\ell+1}$. To see that there is no heavier path, observe that all the non-backbone edges have flow value strictly less than $2^{\ell+1}$ by construction.
After removing that path
with weight~$2^{\ell+1}$, all the central edges are completely decomposed.
The remaining graph and flow, without central edges, has $2^{\ell+1}$ weight-$1$ edges  that are pairwise non-reachable using only non-zero flow edges (4 edges for each of the $2^{\ell-1}$ copies of $G_1$), each of which must be covered by a different path of weight $1$ (and these paths fully decompose the flow).
\end{proof}


\begin{figure}
     \centering\includegraphics[width=.45\linewidth]{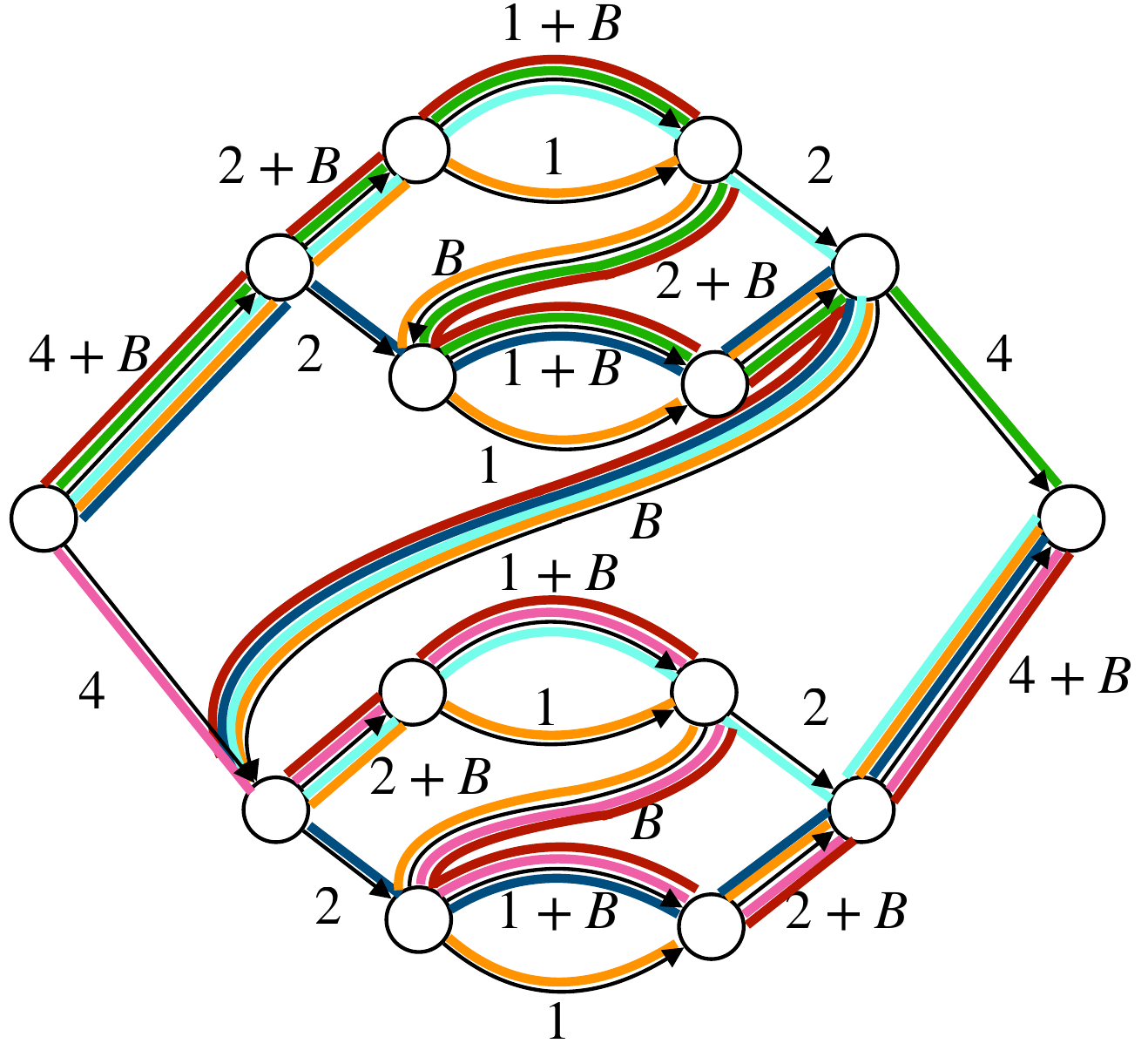}
    \caption{$(G_2,X_{2, B = 2^{\ell+1}})$ with a $2\ell+2=6$ flow decomposition. Orange (weight 1) and dark and light blue (weight 2)
    decompose the non-backbone edges in the two copies of $G_1$. Green and pink (weight 4) decompose the
    non-backbone edges added in $G_2$. Dark red (weight $3$) is the additional path needed to fully decompose the flow.
    \label{fig:greedy:opt}}
\end{figure}

\begin{lemma}
\label{lem:gw-bad-lb}
Let $G_{\ell}$ with flow $X_{\ell,2^{\ell+1}}$ be constructed as described before. It is possible to decompose $X_{\ell,2^{\ell+1}}$ using $2\ell + 2$ paths.
\end{lemma}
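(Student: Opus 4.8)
The plan is to construct an explicit flow decomposition of $X_{\ell,2^{\ell+1}}$ using $2\ell+2$ paths, proceeding by induction on $\ell$ and mirroring the recursive construction of $(G_\ell, X_{\ell,B})$. The intuition, illustrated in \Cref{fig:greedy:base_case_paths,fig:greedy:opt}, is that each recursive level contributes exactly two new paths of the appropriate weight to cover the five edges added at that level, and one extra path along the backbone finishes the job. More precisely, I would prove the stronger inductive statement that the non-backbone edges of $(G_\ell, X_{\ell,2^{\ell+1}})$ can be covered by $2\ell+1$ paths whose weights are powers of two (weights $1, 2, 2, 4, 4, \dots, 2^{\ell}, 2^{\ell}$, i.e.\ one weight-$1$ path and two paths of weight $2^i$ for each $1 \le i \le \ell$), in such a way that the residual flow left on the backbone after subtracting these paths is a constant value on every backbone edge; then a single final path of that weight along the backbone completes the decomposition, for a total of $2\ell+2$.

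For the base case $\ell = 1$, I would simply exhibit the decomposition shown in \Cref{fig:greedy:base_case_paths}: the orange path of weight $1$ and the two blue paths of weight $2$ cover the four non-backbone (weight-$1$ and weight-$2$) edges of $G_1$, leaving residual flow $B - 3 = 2^{2}-3 = 1$... here I would just read off from the figure that after removing these $3$ paths the backbone carries a uniform residual, and a fourth path of weight $3$ along the backbone finishes it — $2\cdot 1 + 2 = 4$ paths total. For the inductive step, given the two disjoint copies of $(G_{\ell-1}, X_{\ell-1,2^{\ell+1}})$ inside $G_\ell$ — note the flow value $B = 2^{\ell+1}$ is \emph{larger} than the ``native'' value $2^{\ell}$ for which the sub-instance was designed, so I would first need to check how the sub-decomposition behaves when the backbone flow is scaled up — I would route the $2\ell-1$ non-backbone paths from one copy, extend/combine them through the $5$ new edges of level $\ell$, add two new paths of weight $2^{\ell}$ to cover the newly added non-backbone edges and balance the two copies, and verify flow conservation and that every non-backbone edge is covered exactly to its flow value. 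The key bookkeeping is that the five new edges at level $\ell$ have flow values that are exactly accounted for by two weight-$2^{\ell}$ paths plus the pass-through of lower-level paths and the backbone.

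The main obstacle I anticipate is the careful matching of path endpoints and weights through the recursive gadget: the sub-instance $(G_{\ell-1}, X_{\ell-1,B})$ appears inside $G_\ell$ with its \emph{backbone} flow reset to the global value $B = 2^{\ell+1}$ rather than $2^{\ell}$, so the inductive hypothesis cannot be applied verbatim — I would instead formulate the induction so that the hypothesis speaks only about covering the \emph{non-backbone} edges (whose flow values do \emph{not} depend on $B$) with $2\ell+1$ power-of-two paths and leaving a uniform residual on the backbone, which is exactly the invariant needed to push the induction through; the final ``$+1$'' path then mops up whatever uniform backbone residual remains. Verifying that the five added edges per level, together with the connections to the sources and sinks of the two sub-copies, can indeed be covered by adding precisely two paths of weight $2^{\ell}$ (and re-routing existing lower-weight paths through the new central edge) is the calculation I would need to do explicitly, guided by \Cref{fig:greedy:recursion,fig:greedy:opt}, but it is routine once the invariant is fixed.
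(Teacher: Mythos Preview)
Your proposal is correct and follows essentially the same approach as the paper: induct on $\ell$ to show the non-backbone edges can be decomposed by $2\ell+1$ paths (of precisely the power-of-two weights you list, summing to $2^{\ell+2}-3$), route those paths through both sub-copies via the central edge, add two weight-$2^{\ell}$ paths at each level for the new non-backbone edges, and finish with a single backbone path. Your momentary arithmetic slip in the base case (the uniform residual on the backbone is $3$, not $B-3=1$) is harmless, since you immediately state the correct weight $3$ for the final path, and your identification of the key subtlety---that the inductive hypothesis must speak only of the non-backbone edges because the backbone flow depends on the global $B$---is exactly the invariant the paper uses.
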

\begin{proof}
Using induction, we first prove that we can use $2\ell+1$ paths to decompose all of the flow of $X_{\ell,2^{\ell+1}}$ on the
non-backbone edges of $G_\ell$, and that these paths have total weight $2^{\ell+2} - 3$.
When $\ell=1$ (see \Cref{fig:greedy:base_case_paths}), we can decompose both the flow-$1$ edges with a single path of weight $1$ which goes through the central edge. Moreover, we can decompose the flow-$2$ edges with two paths of weight $2$, without using the central edge.
These paths have $5=2^{1+2}-3$ total flow.  We now assume that the claims hold for $\ell=k$ and prove it for $\ell=k+1$.
Consider the graph $G_{k+1}$. By assumption, the non-backbone edges in every copy of $G_k$ are fully decomposed by $2k+1$
paths, and the total flow of those paths is $2^{k+2}-3$. Note that these paths are $s$-$t$ paths in $G_k$, but they must be extended to be $s$-$t$ paths in $G_{k+1}$. Because there is a central edge with weight $2^{(k+1)+1}=2^{k+2} > 2^{k+2} -3$ from one copy of $G_k$ to the other,
it is possible to route the $2k+1$ paths from
the first $G_k$ to the second using the central edge. Additionally, the backbone edges from $s$ to the first copy of $G_k$
and from the second copy of $G_k$ to $t$ have flow $2^{k+1} + 2^{k+2}$, so they can be used to complete the routing of the paths
from $s$ to $t$.
Then we can use two additional paths of weight $2^{k+1}$  each to decompose the flow of the two new non-backbone edges; one using the backbone path of the upper $G_k$ (extended by the upper edges of $G_{k+1}$), and analogously, the other through the backbone path of the lower $G_{k}$, which is possible since the paths obtained inductively only decompose $2^{k+1}$ of the backbone flow of $G_{k}$ whereas now $2^{k+2}$ backbone flow must be decomposed. As such, we use $2k+1+2=2(k+1) +1$ paths with total flow 
$2^{k+2}-3 + 2(2^{k+1})= 2^{(k+1)+2} - 3$,
as required.
By the previous, given any graph $G_\ell$, we can decompose all non-backbone edges using $2\ell+1$ paths.
Note that removing these weighted paths from $X_{\ell,2^{\ell+1}}$ yields a flow on $G_\ell$ (of value $3$).
Because the remaining edges (all backbone) form a path from $s$ to $t$, and the remaining
edge values form a flow on $G_\ell$, all remaining edges must have the same flow value, and can be covered by
one path. Thus, $2\ell+2$ paths are sufficient to decompose $X_{\ell,2^{\ell+1}}$. See \Cref{fig:greedy:opt}
for an example when $\ell=2$.
\end{proof}




\ratiomfdpos*


\begin{proof}
By \Cref{lem:gw-bad-greedy,lem:gw-bad-lb}, greedy-weight uses $\Theta(2^{\ell})$ paths to decompose the flow $X_{\ell,2^{\ell+1}}$ described
above, whereas it is possible to decompose the flow with only $\Theta(\ell)$ paths. It can be easily verified by induction that the number of edges of $G_\ell$ is $7 \cdot 2^{\ell} -5$. 
So the ratio between greedy-weight and the optimal for this instance is $\Omega(\frac{m}{ \log m})$.
\end{proof}


\begin{figure}
    \centering\includegraphics[width=.6\linewidth]{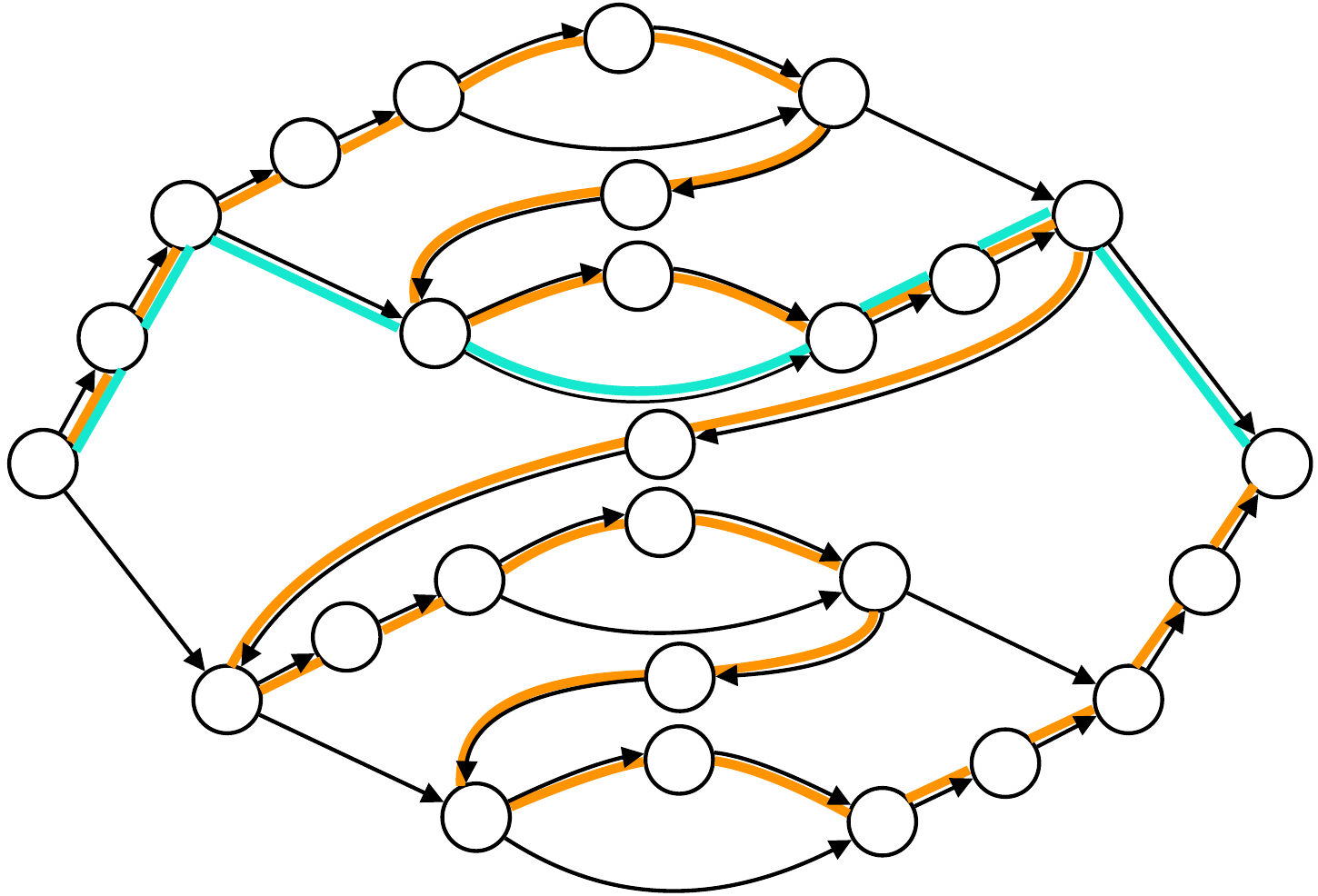}
    \caption{By subdividing the backbone edges, choosing longest paths (orange) and shortest paths (blue) both give the same approximation ratio as choosing heaviest paths in the original flow $X_{\ell,2^{\ell+1}}$.}
    \label{fig:greedy:g_star}
\end{figure}

While greedy-weight is most commonly used in applications,
the approach was first presented as part of a general framework~\cite{vatinlen2008simple}: pick any optimality criteria
for $s$-$t$ paths that is \emph{saturating} (i.e., fully decomposes at least one edge), and successively remove
optimal paths. Since each path is saturating, the algorithm must decompose the flow in $m$ or fewer paths.
Another optimality criterion sometimes used in DNA assembly (e.g., in vg-flow~\cite{baaijens2020strain})
is the \emph{longest path} (with its maximum possible flow so that it is saturating). To adapt our construction of $(G_\ell, X_{\ell,2^{\ell+1}})$ so that this approach yields the same approximation ratio, consider $(G_\ell^*, X_{\ell, 2^{\ell+1}}^*)$,
constructed as in $(G_\ell, X_{\ell,2^{\ell+1}})$ except that we split every backbone edge $(u,v)$ into two edges,
$(u,w)$ and $(w,v)$. See \Cref{fig:greedy:g_star} for an example.
Then the path along the backbone edges will be the longest from $s$ to $t$ and the previous asymptotic analysis still holds,
since we no more than doubled the number of edges (and the number of edges of the new construction is still $\Theta(2^\ell)$).
Yet another optimality criterion, studied in~\cite{hartman2012split} for its application to network routing,
is the shortest path (again with its maximum possible
flow). $(G_\ell^*$, $X_{\ell, 2^{\ell+1}}^*)$ will also force this approach to take an exponential number of paths, since first the algorithm will
decompose all $2^{\ell+1}$ weight-$1$ edges with $2^{\ell+1}$ different paths.

\subsection{Greedy approximation for width-stable graphs}

As exploited in Section~\ref{sec:greedy-lb}, one sticking point for greedy path removal algorithms is the fact that the width of a graph can increase after an edge is fully decomposed.
We now identify a new class of graphs, in which the graph does not increase its width during the execution of the algorithm. We show that greedy-weight decomposes ``enough'' flow at each step in these graphs, giving a \gwsimple-approximation for \mfdpos.

If $X \geq 0$ is a flow on a DAG $G$,
we write $G|_{X}$ ($G$ \emph{restricted to} $X$) to mean the spanning subgraph of $G$ made up of the edges $e \in E$
such that $X(e) \neq 0$. Conversely, if $S$ is a subgraph of $G$, we
write $X|_{S}$ ($X$ \emph{restricted to} $S$) to mean the pseudo-flow $X$ only on the edges of $S$.
In the case of \mfdpos, once an edge is fully decomposed, it cannot be used in future paths,
possibly increasing the width of the graph that can be used to decompose the rest of the flow and
sometimes triggering an increase of the size of a minimum flow decomposition as well. We call a graph \emph{width-stable} 
if it does not have this issue.


\begin{definition}[Width-stable graph] \label{prop:width}
We say that a graph $G$ is \emph{width-stable} if, for any non-negative flows $X \leq Y$ on $G$, it holds that $\width{G|_X} \leq \width{G|_Y}$.
\end{definition}



Many useful \mfdpos instances satisfy \Cref{prop:width}. For example, 
the first proof of MFD's NP-hardness~\cite{vatinlen2008simple} was a reduction to a very simple graph of this form,
as shown in \Cref{figure:np-hardness-reduction}; this means that \mfdpos restricted
to width-stable graphs is also NP-hard. 

\begin{definition}[\cite{garlet2020efficient}]
    We call an $s$-$t$ DAG $G$ \textit{funnel} if every $s$-$t$ path has a private edge that is not contained in any other $s$-$t$ path. 
\end{definition}

Funnels are simple graphs in the sense that they admit a unique flow decomposition~\cite{khan2022improving}. We use funnels in \Cref{lem:enough-flow} to characterize graphs that are not width-stable.
Funnels generalise in/out-forests: along any $s$-$t$ path nodes $v$ first satisfy $\deg^-(v) \leq 1 \leq \deg^+(v)$ and then $\deg^-(v) \geq 1 \geq \deg^+(v)$. We call a node $v$ \textit{forking} (resp. \emph{merging}) if $\deg^+(v) > 1$ (resp. if it is $\deg^-(v) > 1$). 
For a funnel subgraph $F$ of $G$ we call a path in $G$ from a merging node in $F$ to a forking node in $F$ a $\textit{central path}$ of the funnel. The graphs $(G_\ell, X_{\ell, B})$ in \Cref{sec:greedy-lb} are precisely funnels with central paths.

The next property that we need is that there is always, during the execution of the greedy-weight algorithm, a path carrying ``enough'' flow
from $s$ to $t$. 


\begin{figure}
\centering
    \includegraphics[width=.3\textwidth]{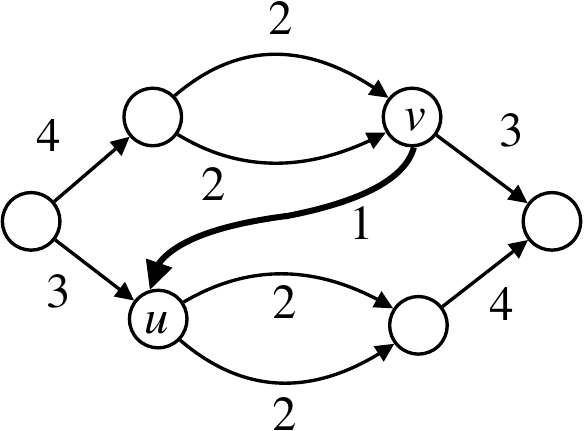}
    \caption{\mfdpos instance with $\width{G}=3$, $|X|=7$, $|X|/\width{G}=7/3>2$, but no path can carry more flow than $2$. By \Cref{lem:enough-flow}, this is equivalent to $G$ not being width-stable, there exists an $s$-$t$ path saturating the central edge from $v$ to $u$ of weight $1$, making the graph a funnel and increasing $\width{G}$ to $4$.}
    \label{fig:widest_path}
\end{figure}

\begin{lemma}
\label{lem:enough-flow}
Let $G$ be an $s$-$t$ DAG. The following statements are equivalent:

\begin{enumerate}
    \item $G$ is width-stable, \label{lem:enough-flow-stable}
    \item $G$ has \textit{paths of large weight}: for any flow $X\ge0$ on $G$, there exists an $s$-$t$ path in $G|_X$ carrying $|X|/\width{G|_X}$ flow, \label{prop:enough-flow}
    \item $G$ has no funnel subgraph with a central path. \label{lem:enough-flow-no-backbone}
\end{enumerate}
\end{lemma}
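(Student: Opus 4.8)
The plan is to prove the three statements equivalent by establishing the cyclic chain $(1)\Rightarrow(2)\Rightarrow(3)\Rightarrow(1)$, handling each step through its contrapositive.

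For $(1)\Rightarrow(2)$ I would argue the contrapositive: a flow whose heaviest path is too light already witnesses non-width-stability. Start from a flow $X\ge 0$ on $G$ with $w=\width{G|_X}$ for which every $s$-$t$ path of $G|_X$ carries strictly less than $|X|/w$ flow, and let $h<|X|/w$ be the heaviest-path (bottleneck) value. By the minimax characterisation of bottleneck paths in a DAG there is an $(s,t)$-edge-cut $D$ of $G|_X$ with $X(e)\le h$ for every $e\in D$; since $X$ is a flow, $|X|\le\sum_{e\in D}X(e)\le h\,|D|$, hence $|D|>w$. The goal is then to carve out of $X$ a sub-flow $Y\le X$ whose support has width $\ge|D|>w$, which directly contradicts width-stability applied to $Y\le X$. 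Constructing $Y$ amounts to keeping, for each edge of $D$, one $s$-$t$ path that crosses $D$ exactly once, so that $D$ is an antichain of $G|_Y$ and therefore $\width{G|_Y}\ge|D|$; the one delicate point is choosing the cut $D$ (equivalently, trimming the reachable set $S$ that defines it) so that the downstream halves of these paths cannot re-enter the source side and re-cross $D$.

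For $(2)\Rightarrow(3)$, again by contrapositive, I would take a funnel subgraph $F$ with a central path $Q$ from a merging node $m$ to a forking node $f$, and build a flow supported on $F\cup Q$ in the spirit of the family $(G_\ell,X_{\ell,B})$ of \Cref{sec:greedy-lb}: route a single unit of flow along $Q$ and distribute the rest over the private edges of $F$ (which exist by the funnel property and play the role of the independent coordinates of $F$) so that every $s$-$t$ path is bottlenecked at some private edge, while the presence of $Q$ keeps $\width{G|_X}$ small. One then checks that $|X|/\width{G|_X}$ exceeds the heaviest-path value, exactly as in \Cref{lem:gw-bad-greedy,lem:gw-bad-lb} and the instance of \Cref{fig:widest_path}.

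For $(3)\Rightarrow(1)$ — the contrapositive being that a non-width-stable graph contains a funnel with a central path — which I expect to be the main obstacle, I would first reduce a witness $0\le X\le Y$ of non-width-stability with $\width{G|_X}>\width{G|_Y}$ to a minimal one: adding the $s$-$t$ paths (or edges) of $G|_Y\setminus G|_X$ back to $G|_X$ one at a time, the width must strictly drop at some step, so there is a flow-realisable subgraph $G'$ and a single edge $e=(a,b)$ with $\width{G'}>\width{G'\cup\{e\}}$. Translating this drop of the minimum edge-path-cover (equivalently, of the maximum edge-antichain) one obtains in $G'$ two parallel edges whose only route from one to the other in $G'\cup\{e\}$ runs through $e$; I would then argue that the merging occurring at $a$ and the forking occurring at $b$ pin down a funnel subgraph $F\subseteq G'$ together with a merging node $m$ that reaches $a$ and a forking node $f$ reachable from $b$, so that $m\rightsquigarrow a\to b\rightsquigarrow f$ is a central path. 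Making this structural extraction precise — verifying the funnel's private-edge property and locating $m$ and $f$ from the antichain that blows up when $e$ is removed — is the heart of the argument; by comparison the other two directions are explicit constructions that build directly on the examples of \Cref{sec:greedy-lb}.
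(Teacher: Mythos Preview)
Your cyclic scheme $(1)\Rightarrow(2)\Rightarrow(3)\Rightarrow(1)$ via contrapositives matches the paper's, and the cut-counting opening of $(1)\Rightarrow(2)$ (a light cut $D$ with $|D|>\width{G|_X}$) is identical. The execution diverges at two points.

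For $(1)\Rightarrow(2)$ you aim for a \emph{sub}-flow $Y\le X$ whose support has $D$ as an antichain. The paper does not do this. It instead \emph{reroutes}: starting from $Y:=X$, whenever a ``backdoor'' path $b(v,u)$ runs from the $T$-side back to the $S$-side it subtracts the bottleneck $\mu$ along $b(v,u)$ and adds $\mu$ along $p_s(u)\cup p_t(v)$, keeping $Y$ a flow while killing all $T\to S$ paths; it then invokes width-stability on the pair $Y\le X+Y$, not $Y\le X$. Your subflow route can be completed (take the $|D|$ single-crossing paths with weight $1/|D|$ and scale), but that needs either fractional flows or a scaling step you do not mention, and it still requires building the two-sided cut $(V\setminus T,T)$ with $T=\{v:\text{some }v\text{--}t\text{ path avoids }S\}$ --- that construction of $T$ \emph{is} the ``delicate point'' you flag, and it is what guarantees each cut edge has an $s$-$t$ path crossing $D$ exactly once.

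For $(3)\Rightarrow(1)$ your single-edge reduction has a real gap: adding the edges of $G|_Y\setminus G|_X$ one at a time does not keep the intermediate graphs of the form $G|_{Z}$ (so ``flow-realisable $G'$'' is not available), while adding whole $s$-$t$ paths keeps flow-realisability but then the width drops across a path, not a single edge. The paper sidesteps this entirely. It takes a funnel $F\subseteq G|_X$ whose maximum antichain equals a maximum antichain $C'$ of $G|_X$ (an out-tree from $s$ to the tails of $C'$ glued to an in-tree from the heads to $t$ works), and lets $C$ be the \emph{rightmost} maximum antichain of $F$. Since $|C|>\width{G|_Y}$, $C$ is not an antichain in $G|_Y$, so some path $P$ in $G|_Y$ goes from the head of one $C$-edge to the tail of another; rightmost-ness forces the start of $P$ to be a merging node of $F$, and $P$ must hit a forking node of $F$ (otherwise adding $P$ could not lower the width), so a prefix of $P$ is the desired central path. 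This direct extraction from the maximum antichain replaces the edge-by-edge minimality argument you anticipated being the crux.
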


See~\Cref{fig:widest_path} for an example.

\begin{proof}
$(\ref{lem:enough-flow-stable}) \implies (\ref{prop:enough-flow})$: Let $G=(V,E)$ be an $s$-$t$ DAG with flow $X \geq 0$ for which there is no $s$-$t$ path in $G|_X$ carrying $b = |X|/\width{G|_X}$ flow. We will show that $G$ is not width-stable. For simplicity, we assume $G = G|_X$, the result follows immediately for any supergraph of $G$.
Let $S$ be the set of vertices reachable from $s$ by a path carrying $b$ flow. By assumption, $t \notin S$, so $(S, V\setminus S)$ is an $s,t$-cut and $X(u,v) < b$ for any edge $(u,v)$ with $u \in S, v \in V \setminus S$. Define $T$ to be the set of vertices that can reach $t$ via a path involving vertices only from $V \setminus S$:
\[ T \coloneqq V \setminus \{ v \in V \mid \text{all $v-t$ paths cross $S$} \}. \]
Since $S \subseteq V \setminus T$, $(V \setminus T, T)$ is an $s,t$-cut. Note that if $(u,v)$ is an edge with $u \in V \setminus T$ and $v \in T$, then also $u \in S$ and $v \in V \setminus S$ and thus $X(u,v) < b$.
The set $C$ defined by
\[ C \coloneqq \{ (u,v) \in E \mid u \in V \setminus T, v \in T \} = E \cap (S \times T) \]
is an $s,t$-cut set (a set of edges that every $s$-$t$ path has to cross) and we have $X(e) < b$ for every $e \in C$. This implies $|C| > \width{G}$.

We construct a new flow $Y \geq 0$ on $G$ for which $C$ is an antichain in $G|_Y$. Initially, $Y \coloneqq X$. Define the following paths:
\begin{itemize}
    \item $p_s(u)$ for all $u \in S$: an $s$-$u$ path with all nodes in $S$,
    \item $p_t(v)$ for all $v \in T$: a $v$-$t$ path with all nodes in $T$.
\end{itemize}
We keep the \textbf{invariant} that the paths exist and do not change in $G|_Y$ throughout the construction of $Y$. Let denote a path from $v \in T$ to $u \in S$, all of whose internal nodes are from $V \setminus (S \cup T)$, as $b(v,u)$, and note that such a path exists iff $C$ is not an antichain. 
Assume it carries flow $\mu > 0$ (i.e., $\mu$ is the minimum flow along $b(v,u)$), 
we then perform the following operation on $Y$:
\[
Y(e) = 
\begin{cases}
    Y(e) - \mu & \text{if } e \in b(v,u)\\
    Y(e) + \mu & \text{if } e \in p_s(u) \cup p_t(v).
\end{cases}
\]
Note that after the operation $Y$ remains a flow and that none of the three paths have pairwise intersecting edges. The process does not violate the \textbf{invariant} and repeating eventually it destroys all paths from $T$ to $S$, making $C$ an antichain in $G|_Y$. This shows that $G$ is not width-stable: $Y \leq X+Y$ and $\width{G|_Y} > \width{G|_{X+Y}} = \width{G}$.
$(\ref{prop:enough-flow}) \implies (\ref{lem:enough-flow-no-backbone})$: Assume that $G$ has a funnel subgraph $F$ with a central path $P$. Let $C$ be a maximum antichain of $F$, and note that every maximum antichain of a funnel consists of private edges only. We define flows $Z_\text{stable}$, $Z_{unstable}$ and $Z$:
\begin{itemize}
    \item $Z_\text{stable}$ is defined to be the flow induced\footnote{Recall that the induced flow $X$ of a path cover $(P)_1^\ell$ is defined as $X = P_1 + \dots + P_\ell$, where we identify each path $P_i$ with a $0/1$-flow with value $1$ on the path edges (i.e., precisely the characteristic function of $P_i$).} by the minimum path cover of $F$\footnote{Since funnels admit a unqiue flow decomposition, they also admit a unique minimum $s$-$t$ path cover.} (i.e., $G|_{Z_\text{stable}} = F$).
    \item $Z_\text{unstable}$ is defined to be the flow induced by the following path cover of the graph consisting of $F$ and $P$: One $s$-$t$ path goes along $P$, covering two edges in $C$\footnote{Indeed, this path covers \emph{exactly} two edges in $C$: one edge needs to be covered to reach $P$ and another edge must be covered to reach $t$, and since $G$ is a DAG, we can not use the same edge twice.}, and the other paths cover $F$ avoiding $P$, this is possible with additional $|C| - 2$ paths.
    \item Finally, $Z \coloneqq Z_\text{stable} + Z_\text{unstable}$.
\end{itemize}
We have $|Z| = |Z_\text{stable}| + |Z_\text{unstable}| = |C| + (|C| - 1) = 2|C| - 1$ and $\width{G|_Z} \leq |C|-1$, and thus  
\[ \frac{|Z|}{\width{G|_Z}} \geq \frac{2|C|-1}{|C|-1} > 2, \]
but all $s$-$t$ paths in $G$ carry no more flow than $2$, because $C$ is an $s,t$-cut set of $G$ and all $Z$ flow values on $C$ are $2$.

$(\ref{lem:enough-flow-no-backbone}) \implies (\ref{lem:enough-flow-stable})$: Assume that $G=(V,E)$ is not width-stable, let $Y \geq X \geq 0$ be flows on $G$ with $\width{G|_X} > \width{G|_Y}$ and let $C'$ be a maximum antichain of $G|_X$. Let $F$ be a funnel subgraph of $G|_X$ containing $C'$ as maximum antichain, and let $C$ be the rightmost maximum antichain of $F$ (that is, the head of every edge in $C$ is $t$ or is merging). Since $C$ is not an antichain of $G|_Y$, there must be a path $P$ in $G|_Y$ connecting two edges in $C$, and it starts at a merging node. It must also enter a forking node, because otherwise adding the path would not decrease the width. This shows that a prefix path of $P$ is a central path of $F$. 

\end{proof}




\begin{lemma}\label{lem:gw-paths}
Let $G = (V, E)$ be a width-stable graph, $\width{G} \ge 2$. Greedy-weight uses at most $\lfloor \log |X| / \log \frac{\width{G}}{\width{G}-1} \rfloor + 1$ paths to decompose any flow $X: E \to \mathbb{N}$.
\end{lemma}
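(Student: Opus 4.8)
The plan is to combine the ``paths of large weight'' property from~\Cref{lem:enough-flow} (statement~\ref{prop:enough-flow}), which holds because $G$ is width-stable, with a potential/amortisation argument on the total flow $|X|$ remaining. First I would observe that greedy-weight maintains the invariant that the current residual is a non-negative flow $X' \le X$ on the subgraph $G|_{X'}$, and that width-stability gives $\width{G|_{X'}} \le \width{G}$ throughout. Hence at every step there is an $s$-$t$ path in $G|_{X'}$ carrying at least $|X'|/\width{G|_{X'}} \ge |X'|/\width{G}$ flow, and greedy-weight picks a heaviest path, so it removes at least that much; I will write $w \coloneqq \width{G}$ for brevity. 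Removing a path carrying $p$ flow decreases $|X'|$ by exactly $p$ (the path is an $s$-$t$ path, so it contributes $p$ to the net outflow of $s$), so after one greedy step the total flow drops from $|X'|$ to at most $|X'|(1 - 1/w) = |X'|\cdot\frac{w-1}{w}$.

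Next I would iterate this bound: starting from $|X|$, after $k$ greedy steps the residual total flow is at most $|X|\left(\frac{w-1}{w}\right)^{k}$. Since all flow values are non-negative integers and the process stops exactly when the residual flow is $0$, the algorithm terminates as soon as this quantity drops strictly below $1$, i.e.\ as soon as $|X|\left(\frac{w-1}{w}\right)^{k} < 1$, equivalently $k > \log|X| / \log\frac{w}{w-1}$. Therefore after $k = \lfloor \log|X| / \log\frac{w}{w-1} \rfloor + 1$ steps the residual flow is an integer-valued flow with total flow strictly less than $1$, hence $0$, hence identically zero, so greedy-weight has finished and used at most that many paths. This uses $w \ge 2$ so that $\frac{w}{w-1}$ is finite and $\log\frac{w}{w-1} > 0$; the edge cases $|X| = 0$ (zero paths, trivially fine) and $|X| = 1$ (the bound gives at least $1$) should be noted but are immediate.

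One subtlety I want to make sure of: width-stability as defined compares $\width{G|_X}$ and $\width{G|_Y}$ for $X \le Y$, and I am applying it with $Y = X$ the original flow and $X = X'$ the residual, which indeed satisfies $0 \le X' \le X$, so $\width{G|_{X'}} \le \width{G|_X} \le \width{G}$ — the last inequality because $G|_X$ is a subgraph of $G$ and any $s$-$t$ path cover of $G$ restricted appropriately still covers, or more directly because $\width{}$ is monotone under taking spanning subgraphs with fewer edges. I should state this monotonicity explicitly rather than lean on it implicitly. The main obstacle is not any deep step but rather getting the accounting exactly right: confirming that a greedy step removes flow \emph{at least} $|X'|/w$ (not just $|X'|/\width{G|_{X'}}$, which could in principle be larger and is fine, but I need the uniform denominator $w$ to make the geometric recursion clean), and carefully handling integrality to turn ``total flow $< 1$'' into ``done''. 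Everything else is a routine manipulation of the geometric bound.
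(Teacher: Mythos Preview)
Your argument is essentially the paper's proof: invoke \Cref{lem:enough-flow} to guarantee a path of weight at least $|X'|/\width{G}$ at each step, iterate the geometric bound $|X'| \le |X|\bigl(\frac{w-1}{w}\bigr)^k$, and use integrality to conclude once this drops below $1$.

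One correction to the subtlety you flagged: the claim that ``$\width{}$ is monotone under taking spanning subgraphs with fewer edges'' is \emph{false} in general---that is precisely what fails in non-width-stable graphs (and is the mechanism behind \Cref{thm:gwbad}). The inequality $\width{G|_X} \le \width{G}$ you need does hold here, but the reason is again width-stability: pick any flow $Z \ge 1$ on $G$ (e.g.\ from \Cref{lem:min-flow-cover}) and apply the definition to $X \le X + Z$, noting $G|_{X+Z} = G$. Alternatively, and this is what the paper does in the proof of \Cref{thm:gwapproximation}, simply replace $G$ by $G|_X$ at the outset so that $\width{G} = \width{G|_X}$ by construction; the bound in the lemma only improves. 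Either way your chain $\width{G|_{X'}} \le \width{G|_X} \le \width{G}$ is valid, just not for the reason you gave.
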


\begin{proof}
Let $b = \width{G}$. Since $G$ is width-stable, greedy-weight removes a path of weight at least $|X'|/b$ at every step by \Cref{lem:enough-flow}, where $X'$ is the remaining flow of the corresponding step. As such, after $c$ steps $|X'| \le |X|\left(\frac{b-1}{b}\right)^c$. If $|X|\left(\frac{b-1}{b}\right)^c < 1$, then $|X'| = 0$, since $|X|$ and the weights of the removed paths belong to $\mathbb{N}$. Solving for $c$ we obtain $c > \log |X| / \log \frac{b}{b-1}$. Therefore, greedy-weight takes (uses) at most $c = \lfloor \log |X| / \log \frac{b}{b-1} \rfloor + 1$ steps (paths).
\end{proof}

\gwapprox*

\begin{proof}
Assume $X>0$ (otherwise, replace $G$ by $G|_X$). Thus, $b=\width{G} \leq \mfdsizepos{G}{X}$, since any flow-decomposition of $X$ induces a path cover of $E$.
If $b\le 1$ greedy-weight finds an optimal solution. Otherwise $b\ge 2$, and \Cref{lem:gw-paths} implies that greedy-weight is a $O(\frac{\log |X| }{b \log \frac{b}{b -1}})= $\gwsimple-approximation for \mfdpos ($b \log \frac{b}{b - 1} = O(1)$ for $b \geq 2$).
\end{proof}

Finally, we show that series-parallel graphs are width-stable, and thus greedy-weight is a $O(\log{|X|})$-approximation on them.

\begin{definition}[Series-parallel graph~\cite{EPPSTEIN199241}]\label{def:ser-par}
A graph is a \emph{two-terminal series-parallel} (\emph{series-parallel} for short) graph with terminal nodes $s$ and $t$ if:
\begin{itemize}
    \item it consists of a single edge directed from $s$ to $t$, and no other nodes, \emph{or}
    \item it can be obtained from two (smaller) two-terminal series-parallel graphs $G_1$ and $G_2$, with terminal nodes $s_1,t_1$, and $s_2,t_2$, respectively, by either
    \begin{itemize}
        \item identifying $s = s_1 = s_2$ and $t = t_1 = t_2$ (\emph{parallel composition} of $G_1$ and $G_2$), or
        \item identifying $s = s_1$, $t_1 = s_2$, and $t = t_2$ (\emph{series composition} of $G_1$ and $G_2$).
    \end{itemize}
\end{itemize}
\end{definition}


\begin{corollary}
\label{cor:greedyseriesparallel}
Greedy-weight is a \gwsimple -approximation for \mfdpos~on series-parallel graphs.
\end{corollary}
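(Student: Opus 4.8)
The plan is to reduce \Cref{cor:greedyseriesparallel} to \Cref{thm:gwapproximation} by showing that every series-parallel graph is width-stable. Given that, the corollary is immediate: a series-parallel graph has a unique source $s$ and sink $t$ with the required degree properties, so \Cref{thm:gwapproximation} applies and gives the \gwsimple-approximation for \mfdpos.

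So the only real content is: \emph{every series-parallel graph is width-stable}. I would prove this by structural induction on the recursive definition (\Cref{def:ser-par}). The base case is a single edge, which trivially is width-stable (every restriction has width $1$ or $0$). For the inductive step, let $G$ be obtained from width-stable $G_1, G_2$ by series or parallel composition, and let $X \le Y$ be non-negative flows on $G$; I must show $\width{G|_X} \le \width{G|_Y}$. The key observation is that a flow $Z$ on a series-parallel $G = G_1 \circ G_2$ decomposes naturally into flows $Z_1$ on $G_1$ and $Z_2$ on $G_2$ (in the parallel case, the restrictions of $Z$ to the two sides; in the series case, $Z$ is simultaneously a flow on both halves once we cut at the shared terminal), and correspondingly $G|_Z = G_1|_{Z_1}$ composed with $G_2|_{Z_2}$ (as a subgraph; edges with zero flow are dropped on each side). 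The width also composes: for parallel composition $\width{G|_Z} = \width{G_1|_{Z_1}} + \width{G_2|_{Z_2}}$ (a cover must cover both sides, and $s$-$t$ paths alternate between the two sides freely — actually each $s$-$t$ path lies entirely in one side, so the widths simply add), and for series composition $\width{G|_Z} = \max(\width{G_1|_{Z_1}}, \width{G_2|_{Z_2}})$ (paths must traverse both halves, so a cover is obtained by pairing up path-halves, and the bottleneck is the larger of the two). Since $X \le Y$ implies $X_i \le Y_i$ on each side, the inductive hypothesis gives $\width{G_i|_{X_i}} \le \width{G_i|_{Y_i}}$, and both $+$ and $\max$ are monotone, so $\width{G|_X} \le \width{G|_Y}$, completing the induction.

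The main obstacle I anticipate is getting the width-composition formulas exactly right, particularly the subtlety that in the series case the shared terminal node $t_1 = s_2$ may not have any flow through it consistently — but flow conservation at that node forces $|Z_1| = |Z_2|$ as values, though the \emph{widths} can differ (e.g. $G_1$ a single edge carrying flow $2$, $G_2$ two parallel edges each carrying flow $1$). One must argue carefully that a minimum path cover of the series composition is obtained by taking minimum covers on each side and splicing them at the shared node, padding the smaller side with extra paths routed through already-covered edges; this is where $\max$ (rather than sum) comes from, and where the width-stability of the restriction matters (we need the restricted graph on the smaller side to still be "connected enough" to route padding paths, which holds because $s$-$t$ paths exist in each $G_i|_{Z_i}$ as long as $Z_i \ne 0$, and the cases where one side is empty are handled separately). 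A clean alternative that sidesteps some of this bookkeeping is to invoke \Cref{lem:enough-flow}: it suffices to show a series-parallel graph has no funnel subgraph with a central path. A central path goes from a merging node $v$ to a forking node $u$; in a series-parallel graph one can show (again by induction on the decomposition) that if $v$ is merging and $u$ is forking and there is a $v$-$u$ path, then... hmm, this does not obviously fail. I would therefore keep the direct structural-induction proof via the width-composition formulas as the primary approach, since \Cref{def:ser-par} hands us exactly the recursion to induct on, and state the two composition lemmas ($\width{\cdot}$ adds under parallel, takes $\max$ under series) as the technical core.
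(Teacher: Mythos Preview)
Your proposal is correct and follows essentially the same route as the paper: reduce to \Cref{thm:gwapproximation} by proving series-parallel graphs are width-stable via structural induction on \Cref{def:ser-par}, using exactly the width-composition formulas you identify (sum under parallel, $\max$ under series) together with monotonicity. The paper's proof is terser on the series case (it just says ``analogously \ldots\ replacing sum with maximum''), whereas you spell out the padding argument; but the structure and the key lemmas are the same.
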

\begin{proof}
%
%
%
Using~\Cref{thm:gwapproximation}, it remains to prove that any series-parallel graph $G = (V, E)$ with any flow $X: E \to \mathbb{N}$ is width-stable. We prove it using structural induction. The base case is when $G$ is single edge from $s$ to $t$, and they are trivially width-stable.

Suppose now that $G$ is obtained by the composition of series-parallel graphs $G_1, G_2$, and let $X \leq Y$ by any non-negative flows on $G$. Let $X_i = X|_{G_i}, Y_i = Y|_{G_i}$ and let $x_i$ denote $\width{G_i|_{X_i}}$, $y_i$ denote $\width{G_i|_{Y_i}}$, for $i=1,2$. 
$x_i \leq y_i$ for $i=1,2$.

If the composition operation is parallel composition, then $\width{G}=\width{G_1}+\width{G_2}$ (since edges of $G_1$ cannot reach edges of $G_2$, and vice versa) and $\width{G|_X}=x_1 + x_2$, $\width{G|_Y}=y_1 + y_2$ (since $G|_X$ and $G|_Y$ are also series-parallel), and $G$ is width-stable by the inductive hypothesis that $x_i \leq y_i$, for $i=1,2$. 

If the composition operation is series composition, width-stability follows analogously to the parallel composition by replacing sum with maximum. 
\end{proof}

However, note that there are width-stable graphs that are not series-parallel.


\section{Width helps solve \mccdint}
\label{sec:mdfd}
In this section we give an approximation algorithm for $\mfdsizeint{G}{X}$. We will obtain this for a more general problem variant, which can be defined as follows. We are given directed a graph $G = (V,E,c)$ with no sink or source nodes, with (cost) a function $c: E \to \mathbb{R}_{\geq 0}$. The \emph{cost} of a circulation $X$ is defined as $\flowcost{X} = \sum_{e \in E} c(e)X(e)$. Note that $\flowcost{\cdot}$ is a linear function: $\flowcost{X+Y} = \flowcost{X} + \flowcost{Y}$ for any two circulations $X,Y$ on $G$.

\begin{definition}
    Given $(G,X)$ of a graph $G = (V,E,c)$ and a circulation $X:E \to \mathbb{Y}$, a \textit{circulation decomposition} of size $k$ of $(G,X)$ is a family of circulations $Y_i: E \to \mathbb{N}$ with weights $(w_1,\dots,w_k) \in \mathbb{Y}^k$ such that $X = w_1 Y_1 + \dots + w_k Y_k$. We call the problem of finding a circulation decomposition of minimum cost $\flowcost{Y_1 + \dots + Y_k}$ the \textit{Minimum Cost Circulation Decomposition} or \mccdy and we write $\mccdsizey{G}{X}$ for the minimum cost.
\end{definition}

Decomposing into \emph{non-negative} weighted circulations rather than paths is a natural generalisation, as paths can also be seen as flows with value $1$ along the path. The following reduction (\Cref{fig:mcfd_reduction}) shows that \mfdy can be regarded as a special case of \mccdy.
\begin{lemma} \label{lem:mcfd_reduction}
\mccdy is NP-hard.
\end{lemma}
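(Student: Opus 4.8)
The plan is to reduce from \mfdy, which is NP-hard (even \mfdpos by~\cite{vatinlen2008simple}, and \mfdint as noted via~\Cref{figure:np-hardness-reduction}), by a cost-preserving, size-controlled transformation from an $s$-$t$ DAG with a flow into a graph with no sources or sinks carrying a circulation. Given an instance $(G, X)$ of \mfdy with $G = (V, E)$ an $s$-$t$ DAG, I would construct $G' = (V, E \cup \{(t,s)\}, c)$ by adding a single \emph{return edge} $(t,s)$, and extend $X$ to a circulation $X'$ on $G'$ by setting $X'(t,s) = |X|$ and $X'(e) = X(e)$ for $e \in E$; flow conservation at $s$ and $t$ now holds because the return edge carries exactly the net flow, so $X'$ is a genuine circulation and $G'$ has no source or sink. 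For the cost function I would set $c(e) = 0$ for every $e \in E$ and $c(t,s) = 1$, so that $\flowcost{Y} $ counts exactly the multiplicity of the return edge in a circulation cover $Y$ of $G'$; since every circulation in a DAG-plus-one-return-edge graph must use the return edge (any circulation decomposes into directed cycles, and every directed cycle of $G'$ must traverse $(t,s)$ because $G$ is acyclic), the cost of a circulation decomposition equals the number of "path-like" circulations, i.e., the total weight times... — here one must be careful: cost counts $\flowcost{Y_1 + \dots + Y_k}$, the cost of the \emph{superposition}, not $k$ itself.

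So the key structural observation I would make precise is the bijection between $s$-$t$ paths in $G$ and elementary directed cycles in $G'$: a cycle $C$ in $G'$ consists of the return edge together with an $s$-$t$ path $P_C$ in $G$, and conversely. Hence any circulation decomposition $X' = w_1 Y_1 + \dots + w_k Y_k$ with $Y_i \geq 0$ restricts (by deleting the return edge) to a flow decomposition $X = w_1 P_1 + \dots + w_k P_k$ in $G$ once we further decompose each $Y_i$ into its elementary cycles — but to keep the size controlled I would instead argue directly at the level of minimum cost: $\mccdsizey{G'}{X'} = \width{G}$-type lower bound is not quite what we want; rather, I claim $\mccdsizey{G'}{X'}$ equals the minimum, over flow decompositions $\mathcal{P} = (P_1,\dots,P_k)$ of $(G,X)$, of the number of edges in $\bigcup_i \{(t,s)\} = 1$... this shows cost is always $1$, which is the wrong invariant. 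The correct fix is to make the reduction track \emph{size} $k$ rather than superposition cost: I would instead put the subdivided-path gadget so that each use of a distinct path contributes distinct cost, e.g. replace the single return edge by forcing the decomposition size into the cost. Concretely, reduce from \mfdpos on the series-parallel NP-hardness instance of \Cref{figure:np-hardness-reduction}: there the MFD size is exactly $3q$, known in advance, so one asks whether $(G',X')$ admits a circulation decomposition whose cost, under the cost function assigning $1$ to the return edge and $0$ elsewhere, applied to the superposition, is at most... — again cost of superposition of the return edge is forced to be its value $X'(t,s)$, independent of $k$.

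Therefore the cleanest route, and the one I expect the authors take, is to choose the cost function to be $1$ on \emph{all} edges of $E$ and $0$ on the return edge, making $\flowcost{Y_1 + \dots + Y_k} = \sum_{e \in E} (Y_1 + \dots + Y_k)(e)$, i.e., the total edge-multiplicity of the cover restricted to $G$; since every circulation of $G'$ projects to a collection of $s$-$t$ paths in $G$ and each such path has length at least $1$, minimising this total edge-multiplicity over all circulation decompositions of $X'$ is equivalent to minimising $\sum_i w_i |P_i|$ over flow decompositions of $X$ — and on the series-parallel instance of \Cref{figure:np-hardness-reduction}, where all source-to-sink paths have the same length (two edges) and weights are forced to be the $s(a_i)$ with $\sum_i s(a_i) = qB$ fixed, this quantity is a fixed constant exactly when the $3$-Partition instance is a YES instance, and strictly larger otherwise. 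The main obstacle, and the step I would spend the most care on, is precisely pinning down which quantity the MCCD cost tracks and choosing $c$ so that the NP-hard feature of the source instance (the \emph{partition}, which is about \emph{weights}, not lengths) is faithfully reflected; once the right cost assignment is identified, checking that the transformation is polynomial-time, that $X'$ is a valid circulation, and that optimal solutions correspond in both directions is routine, and the figure reference \Cref{fig:mcfd_reduction} suggests the construction is exactly this return-edge gadget.
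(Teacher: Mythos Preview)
Your very first attempt is exactly the paper's reduction: add a single return edge $(t,s)$, extend $X$ to a circulation with $X'(t,s)=|X|$, and set $c(t,s)=1$ and $c(e)=0$ for $e\in E$. You then talk yourself out of it because you believe the cost $\flowcost{Y_1+\dots+Y_k}$ collapses to $\flowcost{X'}$ and is therefore independent of the decomposition. This is the genuine gap in your proposal: you are conflating $\flowcost{Y_1+\dots+Y_k}$ with $\flowcost{w_1Y_1+\dots+w_kY_k}$. The cost in the definition of \mccdy is the \emph{unweighted} superposition of the circulations, so with the return-edge cost function it equals $\sum_i Y_i(t,s)$, not $\sum_i w_i Y_i(t,s)=X'(t,s)$. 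In particular, an $s$-$t$ path $P$ yields a circulation $Y_P$ with $Y_P(t,s)=1$, hence cost $1$, so a $k$-path flow decomposition of $(G,X)$ gives a circulation decomposition of $(G',X')$ of cost exactly $k$. Conversely, any circulation decomposition $(Y_i,w_i)_i$ of $(G',X')$ restricts to flows $Y_i|_E$ on $G$ with $|Y_i|=Y_i(t,s)$, and trivially splitting each into $Y_i(t,s)$ unit paths (all with weight $w_i$) gives a flow decomposition of $(G,X)$ using $\sum_i Y_i(t,s)$ paths, which is the cost. Hence $\mccdsizey{G'}{X'}=\mfdsizey{G}{X}$.

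Your later detours (putting cost $1$ on all of $E$, or trying to tie the cost to $\sum_i w_i|P_i|$) inherit the same misreading of the cost functional and do not work; for instance, on the series-parallel $3$-Partition instance the quantity you write down is constant whether or not a partition exists. Once you correct the reading of $\flowcost{\,\cdot\,}$, your first paragraph is already a complete proof matching the paper's.
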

\begin{proof}
Given an $s$-$t$ DAG $G=(V,E)$ and flow $X:E \to \mathbb{Y}$, we define a graph $G' = (V,E',c)$ with $E' = E \cup \{ (t,s) \}$ and $c:E' \to \{ 0, 1 \}$, $c(u,v) = 1 \iff (u,v) = (t,s)$, i.e. cost $1$ only for the edge $(t,s)$. Let $X':E' \to \mathbb{Y}, X'(e) = X(e)$ for $e \in E$ and $X'(t,s) = |X|$ be a circulation on $G'$. The cost of a Minimum Cost Circulation Decomposition $(Y'_1,w'_1),\dots,(Y'_k,w'_k)$ of $(G',X')$ is equal to $\Tilde{c}(Y'_1)+\dots+\Tilde{c}(Y'_k) = Y'_1(t,s) + \dots + Y'_k(t,s)$. Note that an $s$-$t$ path $P:E \to \{0, 1\}$ yields a circulation $Y_P:E'\to\{0,1\}$ of cost $1$, which implies $\mfdsizey{G}{X} \geq \mccdsizey{G'}{X'}$. Define flows $Y_i:E \to \mathbb{Y}$ with $Y_i(e) = Y'_i(e)$ for $i = 1,\dots,k$. Decomposing each flow $Y_i$ trivially into $|Y_i| = Y'_i(t,s)$ paths and assigning them weights $w'_i$ yields a Flow Decomposition of $(G,X)$, showing $\mccdsizey{G'}{X'} \geq \mfdsizey{G}{X}$, and thus this Flow Decomposition is minimum.
 \end{proof}

\begin{figure}
    \centering
    \includegraphics[width=0.3\textwidth]{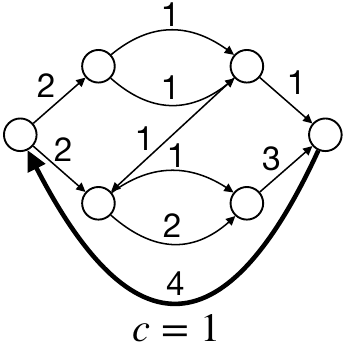}
    \caption{Reduction in \Cref{lem:mcfd_reduction} of a graph with total flow $4$ from \mfdy to \mccdy. The bold edge is the addition to the DAG and is the only edge with cost $1$ while all other edges have cost $0$.}
    \label{fig:mcfd_reduction}
\end{figure}

\begin{definition}
    Given $S \subseteq E$ and a graph $G$, we call a circulation $C$  of minimum cost satisfying $C(e)\geq1$ for all $e\in S$ Minimum Cost Circulation Cover of $S$, and we write $\mcfcS{G} = \flowcost{C}$. If $S = E$, we use $\mcfc{G}$ instead.
\end{definition}

Note that since every circulation decomposition of a graph $G$ covers the edges of non-zero circulation, $\mcfcS{G} \leq \mccdsizey{G}{X}$ with $S = \{ e \in E \mid X(e) \neq 0 \}$. This generalises the width lower bound of \mfdy to \mccdy. Given an $s$-$t$ DAG $G$, $\width{G} = \mcfc{G'}$ for the graph $G'$ obtained by the reduction in \Cref{lem:mcfd_reduction}. We will need a Minimum Cost Circulation Cover for our approximation approach:

\begin{lemma}[\cite{gabow1989faster}, Theorem 3.6]
\label{lem:min-cost-flow-cover}
Let $G = (V, E, c)$ be a graph, and $S \subseteq E$. A Minimum Cost Circulation Cover of $S$ can be computed in $\mcfcruntime$ time. 
\end{lemma}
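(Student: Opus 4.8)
The plan is to realise the Minimum Cost Circulation Cover of $S$ as a minimum cost circulation problem with edge lower bounds, and then apply a known scaling algorithm for it. On $G=(V,E,c)$, impose a lower bound $\ell(e)=1$ for every $e\in S$ and $\ell(e)=0$ for every $e\in E\setminus S$, leave all upper bounds infinite, and keep the per-unit costs $c(e)\ge 0$; we look for an integral circulation $C\ge\ell$ minimising $\flowcost{C}=\sum_{e\in E}c(e)C(e)$, whose optimum is by definition $\mcfcS{G}$. (A feasible $C$ exists precisely when every edge of $S$ lies on a directed cycle of $G$: decompose any circulation into cycles for one direction, and take the union of one cycle through each edge of $S$ for the other; when no such $C$ exists the instance is infeasible, which the reduction below detects as well.)

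Next I would remove the lower bounds by the textbook substitution $C(e)=\ell(e)+f(e)$ with $f\ge 0$. Conservation of $C$ at a node $w$ then reads $\sum_{e\text{ into }w}f(e)-\sum_{e\text{ out of }w}f(e)=b(w)$ with $b(w)=\sum_{e\text{ out of }w}\ell(e)-\sum_{e\text{ into }w}\ell(e)$, and the objective becomes $\sum_{e}c(e)\ell(e)+\sum_{e}c(e)f(e)$, whose first summand is a fixed constant. Attaching a super-source $s^{*}$ with a cost-$0$ arc of capacity $|b(w)|$ into each $w$ with $b(w)<0$, and a super-sink $t^{*}$ with a cost-$0$ arc of capacity $b(w)$ out of each $w$ with $b(w)>0$, turns this into a minimum cost flow instance on $O(n)$ nodes and $O(m)$ arcs with integer data and total supply $\frac{1}{2}\sum_{w}|b(w)|\le |S|\le m$. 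Its optimal solutions correspond, up to the fixed constant, exactly to the circulation covers of $S$; since the constraint matrix of this instance is totally unimodular (a network matrix), an integral optimum exists, which gives the required $C\colon E\to\mathbb{N}$, and a single maximum-flow computation on the same network certifies whether a feasible cover exists.

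It then remains to solve this minimum cost flow instance within the claimed time, which is exactly Theorem~3.6 of~\cite{gabow1989faster}: on an instance with $n$ nodes, $m$ arcs, integer costs, and total supply $O(m)$, its scaling algorithm runs in $\mcfcruntime$ time, the $\log m$ factor arising from the $O(m)$ bound on the supply. The only delicate points --- and the closest thing to an obstacle here --- lie purely in the reduction bookkeeping: getting the signs of the imbalances $b(w)$ right, noting that with non-negative costs an optimal flow never routes around a cycle so the infinite upper bounds are harmless, and observing that the amount of flow to be routed is $O(m)$ rather than some capacity-dependent quantity, which is what makes the scaling factor $\log m$. Finally, as remarked just before the lemma, an explicit family of weighted circulations covering $S$ can, if desired, be peeled off from $C$ by successively removing cycles, without changing the asymptotic running time.
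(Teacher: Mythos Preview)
The paper does not give a proof of this lemma at all: it is stated purely as a citation of Theorem~3.6 of~\cite{gabow1989faster}, with no accompanying argument. Your proposal therefore goes beyond the paper by spelling out the (standard) reduction from Minimum Cost Circulation Cover to a min cost flow instance with bounded supply, and then invoking the same cited theorem. The reduction you describe---imposing unit lower bounds on the edges of $S$, shifting by $\ell$ to obtain node imbalances, and routing the resulting $O(|S|)\le O(m)$ units of supply through a super-source/super-sink pair---is correct, and your observation that the total supply is $O(m)$ is precisely what makes the scaling factor $\log m$ rather than something depending on the cost or capacity magnitudes. In short, your write-up is a valid elaboration of what the paper treats as a black box.
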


The idea behind our approximation algorithm for \mccdint is 
that a circulation $X: E \to \mathbb{Z}$ on a graph $G$
can always be decomposed into circulations of total cost $(\lceil \log \Vert X \Vert \rceil +1) \cdot \mcfc{G}$.
We show this using two key facts: first, that $X$ can be decomposed into
$(\lceil \log \Vert X \Vert \rceil +1)$ circulations with a particular structure, and, second, that each of these circulations can be further decomposed into circulations of total cost of at most $\mcfc{G}$.
A key step in proving both these facts is a subroutine which, given an input circulation $X$, finds another circulation $Y$ with values from $\{0, \pm1\}$ only (a \emph{unitary} circulation) that matches the parity of $X$ on all edges.
Intuitively, given an input circulation $X$, such a unitary circulation $Y$
can be added to $X$ to ``fix'' its odd edges to be even, with only a small change to $X$.

\begin{lemma}\label{lem:parity-unitary}
For any circulation $X: E \to \mathbb{Z}$ on $G=(V,E,c)$, there exists a circulation $Y: E \to \mathbb{Z}$ such that $X\equiv_{2}Y$ and $\Vert Y\Vert\leq1$.
\end{lemma}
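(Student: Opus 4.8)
The plan is to work edge-by-edge on the odd edges of $X$, but since flipping a single edge's parity would break flow conservation, I would instead flip parities along whole circulations. First I would define an auxiliary graph $H$ on the same vertex set $V$ whose edge set consists of exactly those $e \in E$ on which $X(e)$ is odd, each copied as an undirected (or bidirected) edge. The key observation is that in $H$ every vertex has \emph{even} degree: flow conservation at $v$ says $\sum_{e \text{ into } v} X(e) = \sum_{e \text{ out of } v} X(e)$, so the number of incident edges carrying odd flow has the same parity on the in-side and the out-side, hence the total degree of $v$ in $H$ is even. An even-degree (multi)graph decomposes into edge-disjoint cycles; take such a decomposition $\mathcal{C}$ of $H$.

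Next, for each cycle $Z \in \mathcal{C}$, I would orient it consistently around the cycle and set $Y$ to be $+1$ or $-1$ on each edge of $Z$ according to whether the original orientation of that edge in $G$ agrees or disagrees with the chosen cyclic orientation; this makes the restriction of $Y$ to $Z$ a genuine integer circulation (conservation holds at each vertex of the cycle because exactly one edge enters and one leaves in the cyclic sense, contributing $+1$ and $-1$ with signs corrected for orientation). Summing these over all cycles in $\mathcal{C}$ gives a circulation $Y: E \to \{0,\pm 1\}$, so $\Vert Y \Vert \le 1$. Because the cycles are edge-disjoint and cover exactly the odd edges of $X$, we get $Y(e) \ne 0$ (hence $Y(e)$ odd) precisely when $X(e)$ is odd, which is exactly $X \equiv_2 Y$.

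The main step to get right is the sign/orientation bookkeeping: $G$ is a directed graph, the cycles of $H$ are found in the underlying undirected graph, and one must verify that assigning $\pm 1$ along an undirected cycle according to agreement with a fixed traversal direction yields a vector that satisfies directed flow conservation at every vertex. This is the place where a clean lemma statement helps — essentially ``an undirected cycle in $G$, traversed in one direction, is an integer circulation with values in $\{0,\pm 1\}$'' — and then $Y$ is just the sum of such circulations over an edge-disjoint cycle cover of the odd-edge subgraph. Everything else (even-degree argument, existence of a cycle decomposition of an even graph, $\Vert Y \Vert \le 1$, parity matching) is routine. Note the cost function $c$ plays no role here; it is only relevant in the subsequent lemmas that bound the cost of the resulting decomposition.
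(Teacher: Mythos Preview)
Your proposal is correct and follows essentially the same argument as the paper: build the undirected subgraph on the odd edges, use flow conservation to see every vertex there has even degree, decompose into edge-disjoint cycles, orient each cycle, and set $Y$ to $\pm 1$ according to agreement with the original edge direction. The paper's proof is terser about the orientation bookkeeping you flag, but the construction and verification are identical.
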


\begin{proof}

Consider the undirected graph $G_{\mathrm{odd}}=(V,E_{\mathrm{odd}})$ where $E_{\mathrm{odd}}=\{\{u,v\}\mid(u,v)\in E\text{ and }X(u,v)\text{ is odd}\}$.

Notice that every node of $G_{\mathrm{odd}}$ has even degree due to the conservation of circulation. Thus, $G_{\mathrm{odd}}$ can be written as the edge-disjoint union of cycles. Assign an arbitrary orientation to each cycle and let $E_{\mathrm{odd}}^{+}$ be the set of edges oriented in this way. Define
\[
Y(u,v)=\begin{cases}
+1 & \text{if }(u,v)\in E_{\mathrm{odd}}^{+}\\
-1 & \text{if }(v,u)\in E_{\mathrm{odd}}^{+}\\
0 & \text{if }\{u,v\}\notin E_{\mathrm{odd}}
\end{cases}
\]

Notice that $Y$ is a circulation decomposed as a sum of circulations, each along one of the edge-disjoint cycles. Moreover, $X\equiv_{2}Y$ and $\Vert Y\Vert\leq1$ by construction.
\end{proof}

Repeatedly applying \Cref{lem:parity-unitary} and dividing the resulting even circulation by 2, we obtain the the first key ingredient of the approach.


\begin{corollary}
\label{lem:pow2}
Any (non-zero) circulation $X: E \to \mathbb{Z}$ can be written as
$X=\sum_{i=0}^{\lceil\log\Vert X\Vert\rceil}2^{i}\cdot Y_{i}$,
where $Y_{i}: E \to \mathbb{Z}$ is a circulation with $\Vert Y_{i}\Vert\leq1$ for all $i$.
\end{corollary}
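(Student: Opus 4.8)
The plan is to prove \Cref{lem:pow2} by induction on $\Vert X \Vert$, using \Cref{lem:parity-unitary} as the key step to ``peel off'' the least significant bit of the circulation. First I would observe that $Y_0$ should be exactly the unitary circulation whose existence is guaranteed by \Cref{lem:parity-unitary}: apply the lemma to $X$ to obtain a circulation $Y_0 : E \to \mathbb{Z}$ with $X \equiv_2 Y_0$ and $\Vert Y_0 \Vert \leq 1$. Then $X - Y_0$ is a circulation (difference of circulations) that is even on every edge, so $X' \coloneqq (X - Y_0)/2$ is again an integer circulation.

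Next I would handle the bound on $\Vert X' \Vert$. Since $|Y_0(e)| \leq 1$ and $X(e) - Y_0(e)$ is even, we have $|X(e) - Y_0(e)| \leq |X(e)| + 1$, and because this quantity is even and $|X(e)|+1 \leq \Vert X\Vert + 1$, in fact $|X(e)-Y_0(e)| \leq \Vert X \Vert + 1$ if $\Vert X\Vert$ is odd and $\leq \Vert X\Vert$ if $\Vert X\Vert$ is even; in either case $|X'(e)| \leq \lceil \Vert X\Vert/2 \rceil$, hence $\Vert X'\Vert \leq \lceil \Vert X\Vert / 2\rceil$. The number of recursion levels this incurs is $\lceil \log \Vert X\Vert\rceil$: starting from $\Vert X\Vert$ and repeatedly applying $r \mapsto \lceil r/2 \rceil$, we reach $0$ (or $1$, after which one more step reaches $0$) after $\lceil \log \Vert X\Vert\rceil$ steps, which I would make precise with a short computation or by citing the standard fact about repeated halving. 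For the base case, when $\Vert X\Vert \leq 1$ the circulation $X$ is already unitary and we simply take $Y_0 = X$ and the sum is a single term, matching $\lceil \log \Vert X\Vert\rceil = 0$.

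Finally I would assemble the recursion: by the inductive hypothesis applied to $X'$ (which has strictly smaller infinity norm once $\Vert X\Vert \geq 2$), we can write $X' = \sum_{i=0}^{\lceil \log \Vert X'\Vert\rceil} 2^i Y_{i+1}$ with each $Y_{i+1}$ a unitary circulation, and then
\[
X = Y_0 + 2 X' = Y_0 + \sum_{i=0}^{\lceil \log \Vert X'\Vert\rceil} 2^{i+1} Y_{i+1} = \sum_{j=0}^{\lceil \log \Vert X'\Vert\rceil + 1} 2^j Y_j,
\]
and it remains to check that $\lceil \log \Vert X'\Vert\rceil + 1 \leq \lceil \log \Vert X\Vert\rceil$, which follows from $\Vert X'\Vert \leq \lceil \Vert X\Vert/2\rceil$ together with the identity $\lceil \log \lceil r/2\rceil\rceil + 1 = \lceil \log r\rceil$ for integers $r \geq 2$. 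If the index range from the inductive hypothesis is shorter than $\lceil \log \Vert X\Vert\rceil$, I would pad the sum with zero circulations $Y_j = 0$ so the statement holds with exactly the claimed range. The main obstacle I anticipate is purely the bookkeeping of the ceilings of logarithms — making sure the index $\lceil \log \Vert X\Vert\rceil$ in the statement is exactly right and that the halving bound $\Vert X'\Vert \leq \lceil \Vert X\Vert / 2\rceil$ is tight enough to close the induction without an off-by-one error; the structural content is entirely delivered by \Cref{lem:parity-unitary} and the observation that an edge-wise-even circulation divided by $2$ is still an integer circulation.
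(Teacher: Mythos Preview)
Your proposal is correct and follows essentially the same approach as the paper's proof: apply \Cref{lem:parity-unitary} to peel off a unitary circulation $Y_0$ with $X \equiv_2 Y_0$, set $X' = (X - Y_0)/2$, observe $\Vert X'\Vert \leq \lceil \Vert X\Vert/2\rceil$, and iterate/induct until the norm drops to at most $1$. The paper presents this as an iterative procedure and simply asserts the step count $k \leq \lceil \log \Vert X\Vert\rceil$ from the halving bound, whereas you spell out the ceiling arithmetic and the padding more carefully; the substance is identical.
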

\begin{proof}
If $||X|| \le 1$ the result follows. 
Otherwise apply \Cref{lem:parity-unitary} to obtain $Y_0$ such that $X\equiv_{2}Y$ and $\Vert Y_0\Vert\leq1$. Since $X\equiv_{2}Y_0$, we can define $X'=(X-Y_0)/2$, and thus $X=2X'+Y_0$. Recursively repeat this procedure on $X'$ until $||X'|| \le 1$, obtaining $Y_0, \ldots, Y_k = X'$, so that $X=\sum_{i=0}^{k}2^{i}\cdot Y_{i}$. Finally, note that at each repetition, $||X||$ decreases to at most $\lceil||X||/2\rceil$, thus $k \le \lceil \log{||X||}\rceil$.
\end{proof}






The following result is the second key ingredient of our approach. It guarantees that any unitary circulation can be decomposed into two circulations of total cost of at most $\mcfc{G}$ (see \Cref{fig:unitary-to-difference} for an example).
This is by no means obvious since, among other problems, a unitary circulation may contain positive and negative values which merge and cancel each other out (as in \Cref{fig:unitary-to-difference:unitary}). 

\begin{lemma}\label{lem:unitary-to-difference}
For any circulation $X: E \to \mathbb{Z}$, $\Vert X\Vert\leq1$, there exist circulations $A,B: E \to \mathbb{Z}$ such that:
\begin{enumerate}
    \item $A,B\geq0$
    \item $X = A-B$
    \item $\flowcost{A} + \flowcost{B}\leq\mcfc{G}$
\end{enumerate}
\end{lemma}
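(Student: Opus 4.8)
The plan is to construct $A$ and $B$ out of a minimum cost circulation cover of $G$, after repairing its parity so that it agrees with $X$ modulo $2$. First I would fix a circulation $C$ with $C(e) \ge 1$ on every edge and $\flowcost{C} = \mcfc{G}$ (such a $C$ exists by definition of $\mcfc{G}$, and can be computed via \Cref{lem:min-cost-flow-cover} with $S = E$). If $C$ happened to satisfy $C \equiv_2 X$ on every edge, then $A = (C+X)/2$ and $B = (C-X)/2$ would already work, so the only real obstacle is parity. To deal with it, I would apply \Cref{lem:parity-unitary} to the integer circulation $C - X$, obtaining a unitary circulation $W$ (so $\Vert W \Vert \le 1$) with $W \equiv_2 C - X$.

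Next I would consider the two circulations $C + W$ and $C - W$. Both are non-negative, since $C \ge 1$ and $-1 \le W \le 1$, and both have the same parity as $X$ on every edge: from $W \equiv_2 C - X$ one gets $C - W \equiv_2 X$ and $C + W \equiv_2 2C - X \equiv_2 X$. Since $\flowcost{\cdot}$ is linear, $\flowcost{C+W} + \flowcost{C-W} = 2\flowcost{C} = 2\,\mcfc{G}$, so at least one of the two, say $C'$, satisfies $\flowcost{C'} \le \mcfc{G}$. I would then set $A := (C' + X)/2$ and $B := (C' - X)/2$. These have integer entries because $C' \equiv_2 X$; they are circulations, being rational combinations of the circulations $C'$ and $X$ with integer entries; and $A - B = X$ by construction. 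For non-negativity, on an edge $e$ with $X(e) = 0$ we have $A(e) = B(e) = C'(e)/2 \ge 0$, while on an edge with $X(e) = \pm 1$ the entry $C'(e)$ is odd and non-negative, hence $C'(e) \ge 1 = |X(e)|$, which forces $A(e), B(e) \ge 0$. The cost bound is then immediate: $\flowcost{A} + \flowcost{B} = \flowcost{A+B} = \flowcost{C'} \le \mcfc{G}$.

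The part needing the most care is the parity bookkeeping: verifying that $W \equiv_2 C - X$ makes \emph{both} $C \pm W$ agree with $X$ modulo $2$, and observing that this single parity condition on $C'$ simultaneously makes $(C' \pm X)/2$ integral \emph{and} forces $C'(e) \ge 1$ wherever $X(e) \ne 0$, which is exactly what keeps $A$ and $B$ non-negative. Everything else --- that $C \pm W \ge 0$, that one of the two halves has cost at most $\mcfc{G}$, and the linearity computation --- is routine. In particular, no explicit cycle decomposition of $X$ is needed here; \Cref{lem:parity-unitary} already packages that combinatorial step.
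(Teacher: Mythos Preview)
Your proof is correct and essentially identical to the paper's: the paper takes a unitary $D$ with $D \equiv_2 X+C$ and, after replacing $D$ by $-D$ if necessary so that $\flowcost{D}\ge 0$, sets $A=(C-D+X)/2$ and $B=(C-D-X)/2$; your $W$ satisfies the same parity condition (since $C-X\equiv_2 C+X$), and your choice of the cheaper of $C\pm W$ is exactly the paper's ``WLOG $\flowcost{D}\ge 0$'' step. The only cosmetic difference is in the non-negativity check, where you argue via ``$C'(e)$ odd and $\ge 0$ implies $C'(e)\ge 1$'' while the paper writes $C-D\pm X\ge C-2\ge -1$ and then uses parity to reach $\ge 0$.
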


\begin{proof}
Take $C$ such that $C\geq1$ and $\flowcost{C}=\mcfc{G}$, according to \Cref{lem:min-cost-flow-cover}. Take $D$ such that $D\equiv_{2}X+C$ and $\Vert D\Vert\leq1$, according to \Cref{lem:parity-unitary}. Also, assume $\flowcost{D}\geq0$ without loss of generality (otherwise, take $-D$, which satisfies the same properties).

Since $D\equiv_{2}X+C$, we have $C-D\pm X\equiv_{2}0$. So we can take $A=(C-D+X)/2$ and $B=(C-D-X)/2$.
\begin{enumerate}
\item Notice that $C-D\pm X\geq C-2$ since $\Vert D\Vert,\Vert X\Vert\leq1$. So, $C-D\pm X\geq-1$, since $C\geq1$. But $C-D\pm X\equiv_{2}0$ so $C-D\pm X\geq0$, whence $A,B\geq0$.
\item $A-B=\frac{C-D+X}{2}-\frac{C-D-X}{2}=X$.
\item $\flowcost{A}+\flowcost{B}=\flowcost{A+B}=\flowcost{\frac{C-D+X}{2}+\frac{C-D-X}{2}}=\flowcost{C-D}=\flowcost{C}-\flowcost{D}\leq\flowcost{C}$ since $\flowcost{D}\geq0$, and $\flowcost{C}=\mcfc{G}$. \qedhere
\end{enumerate}
\end{proof}

\begin{figure}
\centering
    \begin{subfigure}[t]{.3\linewidth}
        \centering\includegraphics[width=.9\linewidth]{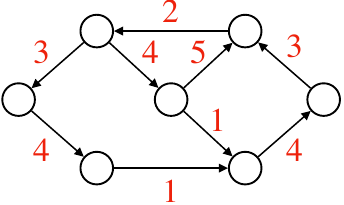}
        \caption{Edge costs}
    \end{subfigure}
    \hfill
    \begin{subfigure}[t]{.3\linewidth}
        \centering\includegraphics[width=.9\linewidth]{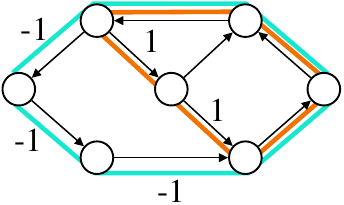}
        \caption{Unitary circulation $X$ on a graph $G$ and a decomposition into two non-negative circulations $A$ and $B$, of weight $1$ in orange (see \textbf{(e)}), and of weight $-1$ in blue (see \textbf{(f)}), respectively.}
        \label{fig:unitary-to-difference:unitary}
    \end{subfigure}
    \hfill
    \begin{subfigure}[t]{.3\linewidth}
        \centering\includegraphics[width=.9\linewidth]{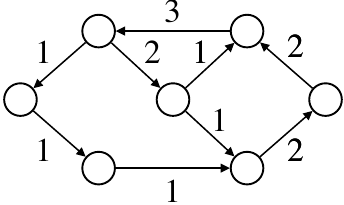}
        \caption{Circulation $C$ covering all edges of $G$, of cost $\flowcost{C} = \mcfc{G} = 42$ (\Cref{lem:min-cost-flow-cover}).}
    \end{subfigure}
\vskip\baselineskip
    \begin{subfigure}[t]{.3\linewidth}
        \centering\includegraphics[width=.9\linewidth]{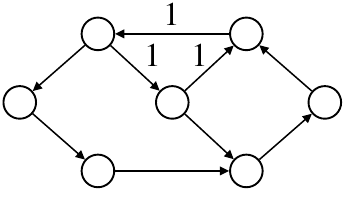}
        \caption{Unitary circulation $D$ of cost $11$ matching the parity of $X+C$, i.e., $D\equiv_{2}X+C$ (\Cref{lem:parity-unitary}).}
    \end{subfigure}
    \hfill
    \begin{subfigure}[t]{.3\linewidth}
        \centering\includegraphics[width=.9\linewidth]{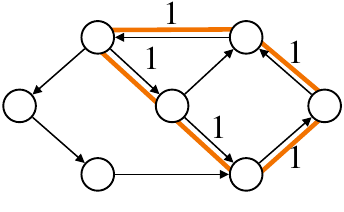}
        \caption{Circulation $A = (C-D+X)/2$ of cost $14$\label{fig:unitarytodiff_A}.}
    \end{subfigure}
    \hfill
    \begin{subfigure}[t]{.3\linewidth}
        \centering\includegraphics[width=.9\linewidth]{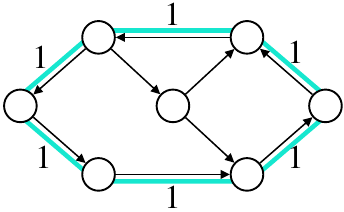}
        \caption{Circulation $B = (C-D-X)/2$ of cost $17$\label{fig:unitarytodiff_B}.}
    \end{subfigure}
    \caption{Example of \Cref{lem:unitary-to-difference} applied to a unitary circulation $X$ on a graph $G$ (for clarity, $0$ circulation values are not shown). Non-negative circulations $A$ and $B$ can be constructed so that $\flowcost{A} + \flowcost{B} \leq \mcfc{G}$ holds. We obtain a decomposition of $X$ by $X = A - B$.}
    \label{fig:unitary-to-difference}
\end{figure} 

%

Finally, expressing any circulation as a sum of at most $\lceil\log\Vert X\Vert \rceil + 1$ unitary circulations (\Cref{lem:pow2}), and decomposing each unitary circulation into two circulations with cost of at most $\mcfc{G}$ (\Cref{lem:unitary-to-difference}), we can decompose the circulation into circulations of total cost no more than $\lceil\log\Vert X\Vert\rceil + 1$ whose weights are positive and negative powers of two.


\begin{theorem}\label{thm:mdfd}
Given a graph $G=(V,E,c)$ and a circulation $X\colon E\to\mathbb{Z}$ with $k \coloneqq \lceil\log\Vert X\Vert\rceil$, there exist circulations $A_i, B_i$ for $i = 0,\dots,k$ and weights $\{w_{0},\dots,w_{k}\}\subseteq \{2^i~|~ i\in \mathbb{N}\}$, with $\flowcost{A_0 + \dots + A_k + B_0 + \dots + B_k}\leq(k+1)\cdot\mcfc{G}$ such that 
$X=w_{0}(A_{0}-B_{0})+\dots+w_{k}(A_{k}-B_{k})$.
\end{theorem}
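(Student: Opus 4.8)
The plan is to obtain the statement by simply composing the two key ingredients already established in this section. First I would invoke \Cref{lem:pow2} to write the circulation as $X = \sum_{i=0}^{k} 2^{i}\cdot Y_i$, where each $Y_i\colon E\to\mathbb{Z}$ is a unitary circulation ($\Vert Y_i\Vert\leq 1$) and $k=\lceil\log\Vert X\Vert\rceil$; the degenerate case $X=0$ is trivial (take all $A_i=B_i=0$) and can be set aside. This already fixes the weights to be $w_i=2^{i}$, which lie in $\{2^i \mid i\in\mathbb{N}\}$ as required.

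Next, for each index $i$ I would apply \Cref{lem:unitary-to-difference} to the unitary circulation $Y_i$, obtaining non-negative circulations $A_i,B_i\colon E\to\mathbb{Z}$ with $Y_i=A_i-B_i$ and $\flowcost{A_i}+\flowcost{B_i}\leq\mcfc{G}$. Substituting these into the decomposition from the previous step yields $X=\sum_{i=0}^{k}2^{i}(A_i-B_i)=\sum_{i=0}^{k}w_i(A_i-B_i)$, which is exactly the claimed identity.

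The cost bound then follows from the linearity of $\flowcost{\cdot}$ recorded at the start of \Cref{sec:mdfd}: $\flowcost{A_0+\dots+A_k+B_0+\dots+B_k}=\sum_{i=0}^{k}\bigl(\flowcost{A_i}+\flowcost{B_i}\bigr)\leq(k+1)\cdot\mcfc{G}$, since there are $k+1$ indices and each summand is at most $\mcfc{G}$.

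There is essentially no obstacle here beyond bookkeeping: the whole content of the theorem is concentrated in \Cref{lem:pow2} and \Cref{lem:unitary-to-difference}, and the statement is just their composition. The only point worth double-checking is that \Cref{lem:unitary-to-difference} is applied at every level to the \emph{same} graph $G$ (and hence with the same value $\mcfc{G}$ on the right-hand side), so that the per-level bounds add up cleanly; this is immediate since the graph never changes during the construction.
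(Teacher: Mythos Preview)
Your proposal is correct and matches the paper's proof essentially verbatim: the paper likewise just combines \Cref{lem:pow2} and \Cref{lem:unitary-to-difference} to write $X=\sum_{i=0}^{k}2^{i}Y_i=\sum_{i=0}^{k}2^{i}(A_i-B_i)$ with $\flowcost{A_i+B_i}\leq\mcfc{G}$, and the cost bound follows by summing over the $k+1$ indices. Your added remarks (the trivial $X=0$ case, linearity of $\flowcost{\cdot}$, and that the same $G$ is used at every level) are correct clarifications but not needed beyond what the paper records.
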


\begin{proof}
Combine \Cref{lem:pow2} and \Cref{lem:unitary-to-difference}, getting
\begin{align*}
X & =\sum_{i=0}^{k}2^{i}\cdot Y_{i} =\sum_{i=0}^{k}2^{i}\cdot(A_i - B_i)
\end{align*}
where $\flowcost{A_i+B_i}\leq\mcfc{G}$.
\end{proof}



The proof of \Cref{thm:mdfd} suggests a straightforward algorithm for \mccdint, which we detail in \Cref{alg:mdfd-approx} and describe at a high level here. First, iteratively decompose $X$, yielding  
$ \log \lceil \Vert X\Vert \rceil + 1$ 
unitary circulations.
Then use \Cref{lem:unitary-to-difference}
to decompose each into two circulations of cost at most $\mcfc{G}$.
However, $\mcfc{G}$ is not necessarily a lower bound on \mccdint if the circulation is $0$ on some edges, and thus this approach does not directly derive an approximation. To overcome this issue, we instead find a circulation decomposition of a spanning subgraph $G'$ of $G$ for which $\mcfc{G'}$ lower bounds $\mccdsizeint{G}{X}$. Namely, we first find a minimum cost circulation cover in $G$ of the subset $S$ of edges with non-zero flow in $\mcfcruntime$ 
time (according to \Cref{lem:min-cost-flow-cover}), and then remove from $G$ any edge not covered by the circulation, obtaining $G'$.
By construction, the cost of this circulation cover is a lower bound of $\mccdsizeint{G}{X}$. Moreover, the cost of this circulation cover is exactly $\mcfc{G'}$, since every circulation cover of $G'$ is also a circulation cover of $S$ in $G$.

To prove the correctness of \Cref{alg:mdfd-approx}, we first define a a subroutine implementing \Cref{lem:parity-unitary}.


\begin{lemma}\label{lem:parity-unitary-alg}
\Cref{alg:parity-unitary} returns a unitary circulation from an input circulation $Y$ such that
$X \equiv_2 Y$, as in
\Cref{lem:parity-unitary}, in $O(m)$ time.
\end{lemma}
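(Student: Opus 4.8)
The plan is to observe that \Cref{alg:parity-unitary} is nothing more than a faithful algorithmic rendering of the constructive proof of \Cref{lem:parity-unitary} (it is the subroutine invoked inside \Cref{alg:mdfd-approx}), so that its correctness is inherited directly from that proof; the only genuinely new content to establish is the $O(m)$ time bound. I would first recall the three phases of the construction and point to where each occurs in the pseudocode: (i) build the undirected multigraph $G_{\mathrm{odd}}=(V,E_{\mathrm{odd}})$ with one undirected copy of each directed edge $(u,v)\in E$ on which the input circulation is odd; (ii) decompose $E_{\mathrm{odd}}$ into edge-disjoint cycles; (iii) orient each cycle arbitrarily and output the circulation that is $+1$ on a directed edge of $G$ when its direction agrees with the orientation of its copy, $-1$ when it disagrees, and $0$ on every edge where the input is even.

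For correctness I would reuse the argument of \Cref{lem:parity-unitary} essentially verbatim: conservation of the input circulation forces every vertex of $G_{\mathrm{odd}}$ to have even degree, so phase (ii) is well defined; each oriented cycle contributes a net of $0$ at every vertex it traverses, so the output is a circulation; it takes values only in $\{0,\pm1\}$ since each edge of $E_{\mathrm{odd}}$ lies on exactly one cycle; and it is odd precisely on $E_{\mathrm{odd}}$, hence has the same parity as the input everywhere, which is the assertion ``$X\equiv_2 Y$'' of \Cref{lem:parity-unitary}.

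For the time bound I would go phase by phase. Phase (i) is a single scan of $E$ testing parities, so $O(m)$. Phase (ii) uses the standard fact that an all-even-degree graph decomposes into edge-disjoint cycles and that such a decomposition is computable in time linear in its number of edges: repeatedly start at a vertex with an unused incident edge, follow unused edges (always possible, since at any intermediate vertex an odd number of its incident edges have been used so far and its degree is even) until a vertex repeats, peel off the cycle so formed, and recurse; with adjacency lists carrying a per-vertex moving pointer and lazy deletion of used edges, no edge is inspected more than a constant number of times, giving $O(|E_{\mathrm{odd}}|)=O(m)$. Phase (iii) is a final $O(m)$ pass that writes the $\{0,\pm1\}$ values, so the total is the claimed $O(m)$. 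The main obstacle is precisely the linear-time realisation of phase (ii): a naive implementation that rescans incident lists could degrade to $\Theta(m^2)$, so the argument has to commit to the pointer-plus-lazy-deletion bookkeeping; a secondary technicality is that $(u,v)$ and $(v,u)$ may both be edges of $G$ with odd values, producing a parallel pair in $G_{\mathrm{odd}}$, which cycle extraction handles fine provided each undirected copy stays tagged with its originating directed edge of $G$ so that the sign in phase (iii) is assigned unambiguously.
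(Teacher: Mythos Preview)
Your proposal is correct and takes essentially the same approach as the paper: correctness is inherited directly from \Cref{lem:parity-unitary}, and the only new content is the $O(m)$ running-time bound. The paper's own proof is in fact much terser---it simply asserts that the first three subroutines and the \texttt{for}-loop each take $O(m)$ time---so your detailed justification of the linear-time cycle decomposition (moving pointers, lazy deletion) and your handling of the parallel-edge technicality are welcome elaborations rather than a different route.
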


\begin{proof}
The correctness of the algorithm is given by \Cref{lem:parity-unitary}. Finally, the first $3$ subroutines as well as the entire \texttt{for-loop} takes $O(m)$ time.
\end{proof}



\begin{algorithm}[t]
\caption{\textsc{Unitary}(G,X): Produces a unitary circulation $Y$ from an input circulation $X$ such that $X \equiv_2 Y$,
as in \Cref{lem:parity-unitary}}
\label{alg:parity-unitary}
\begin{algorithmic}[1]
    \STATE $E_{odd} \gets$ odd edges of G, undirected
    \STATE $C \gets$ a decomposition of $G_{odd}=(V,E_{odd})$ into cycles, oriented arbitrarily
    \STATE $E_{odd}^+ \gets$ directed edges of C
    \FOR{$(u,v) \in E$}
        \IF{$(u,v) \in E_{odd}^+$}
        \STATE $Y(u,v) \gets +1$
        \ELSIF{$(v,u) \in E_{odd}^+$}
        \STATE $Y(u,v) \gets -1$
        \ELSE
        \STATE $Y(u,v) \gets 0$
        \ENDIF
    \ENDFOR
    \STATE \textbf{return} Y
\end{algorithmic}
\end{algorithm}

\begin{algorithm}[t]
\caption{Finds the circulation decomposition of \Cref{thm:mdfd}}\label{alg:mdfd-approx}
\begin{algorithmic}[1]
    \STATE Compute a minimum cost circulation cover of $\{(u,v)\in E \mid X(u,v) \neq 0\}$ \COMMENT{\Cref{lem:min-cost-flow-cover}}
    \label{line:preprocess}
    \STATE Remove from $G$ any edge not covered by this circulation cover to obtain $G'$ 
    \STATE $\mathcal{P} \gets []$, $\mathcal{W} \gets []$ \COMMENT{length-zero vectors}
    \STATE $C \gets$ circulation of cost $\mcfc{G'}$, $C \geq 1$ \label{line:mpc}
    \COMMENT{\Cref{lem:min-cost-flow-cover}}
    \STATE $D \gets \textsc{Unitary}(G',C)$; if $|D| < 0$ set $D = -D$\COMMENT{\Cref{alg:parity-unitary}}
    \label{line:D}
    \STATE $i \gets 0$
    \WHILE{$\Vert X \Vert > 1$}\label{line:startwhile}
        \STATE $Y_i \gets \textsc{Unitary}(G',X)$\COMMENT{\Cref{alg:parity-unitary}}\label{line:Y}
        \STATE $X \gets (X - Y_i)/2$
        \STATE $i \gets i + 1$
    \ENDWHILE
    \STATE $Y_{i} \gets X$
   \FOR{$j \in \{0, \ldots, i\}$ s.t. $Y_j \neq 0$}\label{line:startfor}
        \STATE $A \gets C - D + Y_j$, $B \gets C -D - Y_j$
        \STATE Concatenate $A$ and $B$ 
        to $\mathcal{P}$
        \STATE Concatenate $2^j$ and $-2^j$ to $\mathcal{W}$
   \ENDFOR
    \STATE \textbf{return} $(\mathcal{P}, \mathcal{W})$
\end{algorithmic}
\end{algorithm}


\mdfdapprox*


\begin{proof}
By \Cref{thm:mdfd} and our previous discussion, \Cref{alg:mdfd-approx} returns a circulation decomposition for $X$ with no more cost
than $(\lceil\log\Vert X\Vert\rceil)+1\cdot\mcfc{G'}\leq(\lceil\log\Vert X\Vert\rceil+1)\cdot\mccdsizeint{G}{X}$. We analyse the runtime line by line.
Lines~\ref{line:preprocess} and~\ref{line:mpc} 
take $\mcfcruntime$ 
time by \Cref{lem:min-cost-flow-cover}. 
The call to
\Cref{alg:parity-unitary} on line~\ref{line:D} takes $O(m)$ time by \Cref{lem:parity-unitary-alg},
and checking the cost of $D$ and flipping signs (if necessary) also takes $O(m)$ time.
By \Cref{lem:pow2}, the while loop on line~\ref{line:startwhile} executes at most
$\log \lceil \Vert X \Vert \rceil + 1$ times, meaning that the entire execution takes
$O(m \log \Vert X\Vert )$ time since line~\ref{line:Y} takes $O(m)$ time by
\Cref{lem:parity-unitary-alg}. Since there are at most $\log \lceil \Vert X \Vert \rceil + 1$ $Y_i$'s,
the for loop on line~\ref{line:startfor} executes at most $\log \lceil \Vert X \Vert \rceil + 1$ times.
Each execution of the for-loop finds two circulations of total cost of at most $\mcfc{G'}$ in $O(m)$ time, 
so the whole also loop takes $O(m\log\Vert X\Vert)$ time. 
Thus, the overall runtime is $O(n\log m(m+n\log n)+m\log\Vert X\Vert)$.
\end{proof}

With the reduction given in \Cref{lem:mcfd_reduction}, we obtain an approximation algorithm of the same ratio for \mfdint. However, we can improve the runtime of \Cref{lem:min-cost-flow-cover}:

\begin{corollary}
\label{lem:mcfdint-to-mfdint}
\Cref{alg:mdfd-approx} is also a $\log \lceil \Vert X \Vert \rceil + 1$-approximation for \mfdint~with runtime $O(m(n+\width{G}\log \lceil \Vert X \Vert \rceil))$.
\end{corollary}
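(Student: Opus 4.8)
The plan is to run \Cref{alg:mdfd-approx} on the circulation instance produced by the reduction of \Cref{lem:mcfd_reduction}, so that the approximation factor is inherited verbatim from \Cref{thm:mdfd-approx}, and then observe that in the reduced graph the two expensive ingredients of the general algorithm become cheap. Concretely, given an \mfdint instance $(G,X)$ with $G$ an $s$-$t$ DAG, one forms $G' = (V, E\cup\{(t,s)\}, c)$ with $c$ the indicator of the feedback edge $(t,s)$ and the circulation $X'$ extending $X$ by $X'(t,s) = |X|$, and runs \Cref{alg:mdfd-approx} on $(G',X')$. By \Cref{thm:mdfd-approx} the returned circulation decomposition of $X'$ has cost at most $(\lceil\log\Vert X'\Vert\rceil + 1)\cdot\mccdsizeint{G'}{X'}$; restricting each output circulation to $E$ and splitting it into that many unit-weight $s$-$t$ paths (exactly as in the proof of \Cref{lem:mcfd_reduction}) turns this into a flow decomposition of $(G,X)$ whose size equals that cost, while $\mccdsizeint{G'}{X'} = \mfdsizeint{G}{X}$. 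Hence the output is a $\lceil\log\Vert X\Vert\rceil + 1$-approximation for \mfdint (with $\Vert\cdot\Vert$ read on $X'$, which differs from $X$ only by the value on the feedback edge).

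For the runtime the key point is that in $G'$ the cost function is supported only on $(t,s)$, so the minimum cost circulation cover of the nonzero-flow edges minimises $\flowcost{C} = C(t,s) = |C|$ over circulations $C$ that are $\geq 1$ on those edges; since every circulation of $G'$ restricts to an $s$-$t$ flow on $G$ of value exactly $C(t,s)$, this is precisely the min-flow cover problem of \Cref{lem:min-flow-cover} and is solvable in $O(mn)$ time, replacing the $\mcfcruntime$ bound of \Cref{lem:min-cost-flow-cover} used for general \mccdint. Thus Lines~\ref{line:preprocess} and~\ref{line:mpc} cost $O(mn)$; Line~\ref{line:D} (a call to \Cref{alg:parity-unitary} plus a sign check) costs $O(m)$; the \texttt{while} loop of Line~\ref{line:startwhile} and the \texttt{for} loop of Line~\ref{line:startfor} each run $O(\log\Vert X\Vert)$ times doing $O(m)$ work per iteration, for $O(m\log\Vert X\Vert)$ total. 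Finally, converting the circulation decomposition to a path decomposition: the output list contains $O(\log\Vert X\Vert)$ circulations, each of the form $C - D \pm Y_j$ with $\Vert D\Vert,\Vert Y_j\Vert\le 1$ and $C(t,s) = \mcfc{G'} \le \width{G}$, so each has value at most $\width{G}+1$ on the feedback edge; peeling that many unit $s$-$t$ paths, each of length $<n$, from its $E$-restriction costs $O(n\,\width{G})$ per circulation, hence $O(n\,\width{G}\log\Vert X\Vert)$ overall. Summing the three contributions and using $n \le m+1$ and $\width{G}\ge 1$ yields $O(m(n + \width{G}\log\Vert X\Vert))$.

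I expect the main obstacle to be the bookkeeping in the last paragraph --- verifying that the path-extraction step does not dominate the running time. This rests on the bound $\sum_{A,B} (A(t,s)+B(t,s)) = \flowcost{\sum A+\sum B} \le (\lceil\log\Vert X\Vert\rceil+1)\,\mcfc{G'} = (\lceil\log\Vert X\Vert\rceil+1)\,\width{G}$ inherited from \Cref{thm:mdfd}, combined with the $O(n)$ cost of extracting each unit path from an integral $s$-$t$ flow. The only other point needing care --- routine, but worth stating explicitly --- is that a minimum cost circulation cover of $G'$ coincides in value with $\width{G}$, which is what legitimizes substituting \Cref{lem:min-flow-cover} for \Cref{lem:min-cost-flow-cover}; this follows from the one-to-one correspondence between circulations of $G'$ covering $E$ and $s$-$t$ path covers of $G$. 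Beyond these observations the approximation guarantee requires no new argument, being immediate from \Cref{thm:mdfd-approx} and \Cref{lem:mcfd_reduction}.
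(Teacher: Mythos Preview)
Your proposal is correct and follows essentially the same approach as the paper: reduce via \Cref{lem:mcfd_reduction}, inherit the approximation guarantee from \Cref{thm:mdfd-approx}, replace the general min-cost circulation cover of \Cref{lem:min-cost-flow-cover} by the $O(mn)$ min-flow computation of \Cref{lem:min-flow-cover} (since cost is concentrated on the feedback edge), and account for trivially decomposing each returned $A,B$ into at most $\width{G}$ paths. Your write-up is in fact more careful than the paper's terse proof, spelling out why $\mcfc{G'}=\width{G}$, giving the line-by-line runtime, and flagging the $\Vert X\Vert$ versus $\Vert X'\Vert$ discrepancy that the paper glosses over.
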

\begin{proof}
This is directly achieved by using \Cref{thm:mdfd-approx} with \Cref{lem:mcfd_reduction} and by calculating the \width{G} according to \Cref{lem:min-flow-cover}. Note that the flows $A$ and $B$ need to be trivially decomposed into at most $\width{G}$ paths, causing the additional factor in the runtime.
\end{proof}

A theorem analogous to \Cref{thm:mdfd}
for \mccdpos~is desirable, but cannot be achieved directly with the previous methods, as \Cref{lem:parity-unitary} makes use of negative weights. However, the approach can be adapted for \mccdpos~if the input flows are width-stable (\Cref{prop:width}), and if it is possible to ``fix'' the odd flows
to be even with only $\mcfc{G}$ unitary flows, which we leave as an open question.




\section{Solving the $k$-Flow Weight Assignment Problem}\label{sec:k-flow-weight-assignment}


In this section, we consider a restriction of MFD from~\cite{kloster2018practical} (see \Cref{figure:counter-example} for an example).
\begin{definition}[$k$-Flow Weight Assignment]
Given a flow $X: E \to \mathbb{Y}$ on a graph $G = (V, E)$ and a set of $s$-$t$ paths $\{P_1,\dots,P_k\}$, the problem of finding an assignment of weights to the paths, such that they form a flow decomposition of $(G,X)$, is called \emph{$k$-Flow Weight Assignment} ($k$-FWA). We write $k$-FWA$_{\mathbb{Y}}$ if we require the path weights to belong to $\mathbb{Y}$.
\end{definition}

Given $k$ $s$-$t$ paths, $k$-FWA can be solved by a linear system defined by $Lw=X$, where $X_j \in \mathbb{Y}$ is equal to the flow $X(e_j)$ of the edge $e_j$ (we identify flows $X:E\to\mathbb{Y}$ with vectors $X\in\mathbb{Y}^m$) and $L$ is the $m\times k$ $0/1$ matrix with $L_{i,j} = 1$ if and only if path $P_j$ crosses edge $e_i$.
The resulting solution $w \in \mathbb{Y}^{k}$ is the weight assignment to each path. For a flow graph $(G,X)$, we denote by $L_\mathbb{Y} = L_\mathbb{Y}(P_1, \dots, P_k) = \{ w \in \mathbb{Y}^k | X = \sum_{j = 1}^k P_k w_k \}$ the linear system corresponding to the paths $P_1, \dots, P_k$.

We shortly discuss how to solve \kfwaint.
The linear system defined by the paths is a system of linear Diophantine equations.
It is well known that integer solutions to such systems can be found in polynomial time; see, e.g.,~\cite[Chapter 5]{Shrij1986linearprog}. 

Solving \kfwapos turns out to be more difficult, 
its the linear system contains the inequality $w \ge 0$. In fact, it was shown~\cite{kloster2018practical} that \kfwapos is NP-hard.
The program Toboggan~\mbox{\cite{kloster2018practical}} implements a linear FPT algorithm for \mfdpos. and one step of the algorithm is to solve \kfwapos using an ILP~\cite{kloster2018practical}.
The authors state the following conjecture.

\begin{conjecture}[\cite{kloster2018practical}]
\label{conj:kloster}
If $(P_1, \dots, P_k)$ are the paths of a minimum flow decomposition of $(G,X)$, then the linear system $L_\mathbb{N}(P_1,\dots,P_k)$ has full rank $k$.
\end{conjecture}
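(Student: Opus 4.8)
The plan is to first recast the conjecture: the matrix $L$ has full column rank $k$ precisely when the characteristic vectors of $P_1,\dots,P_k$ are linearly independent in $\mathbb{R}^m$, equivalently when $Lw=X$ has a \emph{unique} real solution. So I would assume for contradiction a nontrivial relation $\sum_{j=1}^{k} c_j P_j = 0$ with $c \in \mathbb{R}^k \setminus \{0\}$ and try to extract from it a flow decomposition of $(G,X)$ of size $k-1$, contradicting the minimality of $(P_1,\dots,P_k)$. Since we may take $c$ with integer (hence rational) entries, and since evaluating the vanishing combination on the out-edges of $s$ forces $\sum_j c_j = 0$ while each $P_j \geq 0$, the relation necessarily has both positive and negative entries.

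The natural move is then a perturbation along the line $w + tc$: for every $t$ we still have $X = \sum_j (w_j + t c_j) P_j$. Moving $t$ in the direction in which the weights with $c_j < 0$ decrease, let $t^\star > 0$ be the first value at which some weight $w_j + t^\star c_j$ reaches $0$; at $t^\star$ all weights are still non-negative (the ones with $c_j \ge 0$ only grew, the ones with $c_j < 0$ have not yet crossed $0$) and at least one is $0$, so dropping the corresponding path gives $k-1$ weighted $s$-$t$ paths whose superposition is $X$. This already proves the analogous statement over $\mathbb{Q}$ or $\mathbb{R}$, because $t^\star = \min_{j : c_j < 0} (-w_j/c_j)$ is rational whenever $w$ and $c$ are.

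The obstacle, and the reason the conjecture is actually false over $\mathbb{N}$, is integrality: the collapsed weights $w_j + t^\star c_j$ need not be integers, and clearing denominators only yields a decomposition of a multiple of $X$. Since the proof stalls here, I would pivot to constructing a counterexample: a DAG $G$ and an integer flow $X$ such that (i) a genuine \mfdpos solution $P_1,\dots,P_k$ of $X$ admits a relation $\sum_j c_j P_j = 0$ with $c_j \in \{0, \pm 1\}$, so $L$ has rank $< k$, yet (ii) no decomposition of $X$ into fewer than $k$ paths with non-negative integer weights exists. For (i) one embeds a small ``reroutable'' gadget in which two of the paths can be swapped for two others with the same total edge usage; for (ii) I would fix $\width{G|_X} = k$ via a suitable maximum antichain, so that the width lower bound pins $\mfdsizepos{G}{X}$ to $k$, and choose the weights $w_j$ so that every integer step along $c$ either overshoots $0$ or produces a fractional weight, making a non-negative integer collapse impossible. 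I expect the real work to be part (ii): ruling out \emph{all} size-$(k-1)$ non-negative integer decompositions, which should come down to the maximum-antichain characterization of the width together with a short finite case analysis on the weights of the paths crossing that antichain.
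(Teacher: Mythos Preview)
Your diagnosis of why the perturbation argument proves the rational/real version but stalls over $\mathbb{N}$ is correct, and the pivot to a counterexample is the right call; the paper also disproves the conjecture by an explicit instance. However, the concrete plan you sketch for part~(ii) cannot work, and the obstruction is structural rather than a matter of choosing weights cleverly.

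If $\width{G|_X}=k$, take a maximum antichain $C$ of $k$ edges. Any $s$-$t$ path meets at most one edge of $C$, so the $k$ paths of a size-$k$ decomposition (which must cover all of $C$) meet the edges of $C$ bijectively. Restricting the incidence matrix $L$ to the rows indexed by $C$ therefore yields a $k\times k$ permutation matrix, whence $L$ already has full column rank $k$. In other words, \emph{width equal to $k$ forces linear independence of the $k$ paths}, so (i) and your proposed witness for (ii) are mutually exclusive. Any counterexample must have $\width{G|_X}<k$, and hence the lower bound $\mfdsizepos{G}{X}\ge k$ cannot come from the width; the ``short finite case analysis on the weights of the paths crossing that antichain'' you anticipate is not available.

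The paper's construction lives entirely on the weight side instead of the structural side. It takes $k=8$ paths $A_0,\dots,A_3,B_0,\dots,B_3$ on a graph whose width is strictly below $8$, exhibits two distinct non-negative integer weight vectors (so $\mathrm{rank}\,L<8$), and argues $\mfdsizepos{G}{X}=8$ via arithmetic obstructions: the weights of $A_i,B_i$ for $i\ge 1$ are chosen as shifted powers of $6$ so that, by size separation and parity on the right-hand gadget, these six paths are forced in \emph{any} minimum decomposition; then a parity argument on the two remaining slots shows neither $a_0$ nor $b_0$ can be driven to $0$ by an integral perturbation. This is precisely the kind of ``weights obstruct integer collapse'' idea you gesture at, but applied globally to pin the optimum, not to a width-based antichain. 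The paper concedes the full case analysis is tedious and additionally verifies $\mfdsizepos{G}{X}=8$ with two independent exact solvers.
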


In case of a fractional decomposition (in which the weights of the paths are allowed to be rational non-negative numbers), it is indeed true that the induced linear system is of full rank $k$~\cite{vatinlen2008simple}. As mentioned in the introduction, if the conjecture turned out to be true for natural numbers, Toboggan could avoid resorting to solving an ILP, 
since just solving the standard linear system at hand would return its unique solution. 
As observed by the authors, this would decrease the asymptotic worst case upper bound of Toboggan.

\begin{figure}
    \includegraphics[width=1.\textwidth]{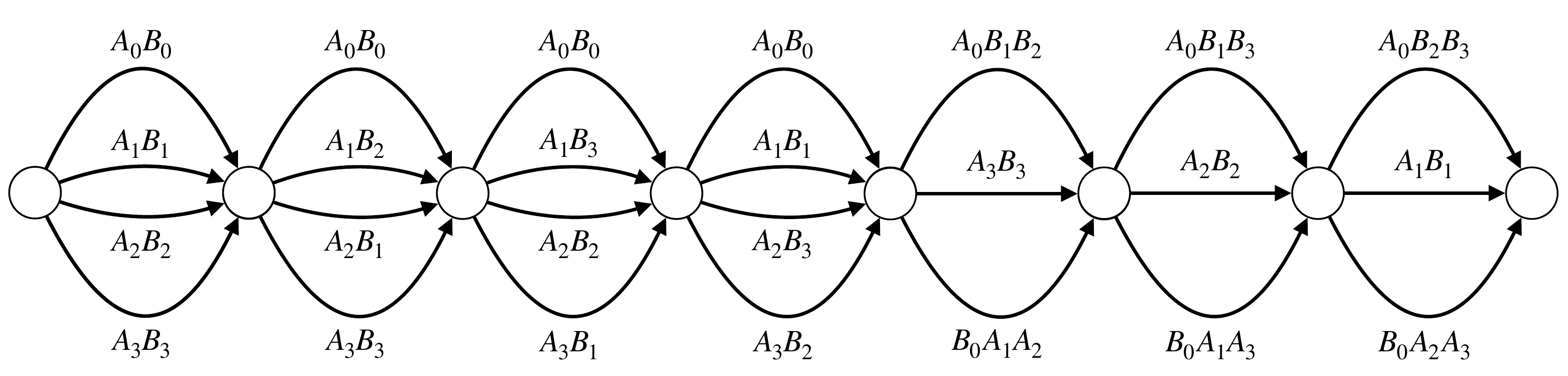}
    \caption{
    Paths $A_i$ and $B_i$ ($i \in \{0,1,2,3\}$), each edge being labeled with the paths it appears in. Assign to each path $A_i$ weight $a_i$, and to each path $B_i$ weight $b_i$, such that $a_0=b_0=3$, and $a_i = 6^{2i}+1$ 
    and $b_i = 6^{2i+1}+5$ for $i=1,2,3$. Define the flow $X$ on $G$ as $X = \sum_{i=0}^{3}a_iA_i + \sum_{i=0}^{3} b_iB_i$. Note that these weights are a solution of \kfwapos on input $(G,X)$ with given paths $A_i$, $B_i$ ($i \in \{0,1,2,3\}$).
    }
    \label{figure:counter-example}
\end{figure}



We show that this conjecture is false using a counterexample. Consider the input for \kfwapos from \Cref{figure:counter-example} and the solution therein. We now give another solution for \kfwapos on this input, namely the following path weights: $a_0=5, b_0 =1$, and $a_i = 6^{2i} + 2, b_i = 6^{2i+1} + 4$, for $i = 1, 2, 3$.
One can easily verify that this is another solution to \kfwapos on the input in \Cref{figure:counter-example}, thus proving  
that the rank of the corresponding linear system is strictly less than $8$.

To disprove \Cref{conj:kloster}, it remains to show that any flow decomposition contains at least $8$ paths. Due to the technicality of this proof (and its exhaustive case-by-case analysis), we only explain the intuition behind the construction in \Cref{figure:counter-example} and behind the correctness proof. However, as an additional check we also ran both Toboggan~\cite{kloster2018practical} and a recently developed ILP solver for \mfdpos~\cite{Dias:2022uv} on this instance, both returning $\mfdsizepos{G}{X} = 8$.


The intuition is as follows. The graph can be divided into two parts: the graph induced by the first $5$ vertices in topological order (left part) and the one induced by the last $4$ (right part). We say that a path is \textbf{fixed} if every minimum flow decomposition of the graph contains this path.
The paths $A_i$ and $B_i$ have expoentially growing weight for growing $i$ and get shuffled around with different permutations of the paired labels $A_iB_j$ on the left part. Due to the exponential growth, ensuring the correct parity on all edges of the right part, we can fix the paths $A_i$ and $B_i$ for $i = 1,2,3$.
This allows us to interpret flow decompositions of less than $8$ paths as decompositions with $8$ paths, where either $A_0$ or $B_0$ carries weight $0$. Consider a flow decomposition where we assign two paths of weights $\lambda_1$ and $\lambda_2$ on the edges labeled $A_0B_0$. For any $\delta \geq 0$, $(\lambda_1 - \delta) + (\lambda_2 + \delta) = a_0 + b_0$ and equivalently for all other edges on the left part. If we decrease $\lambda_1$ by some $\delta > 0$, the weights of $B_1$, $B_2$ and $B_3$ each increase by $\delta / 2$. And thus, $\delta$ must be even. Due to the parity of $a_0$ and $b_0$, they can never reach $0$.

\section{Conclusions}
In this paper we have shown for the first time that width, a natural lower bound for MFD, is also useful when investigating its approximability. On the one hand, using width is a key insight in understanding where greedy path removal heuristics fail. On the other hand, graphs where width is well-behaved (e.g.,~series-parallel graphs) have a guaranteed approximation factor. Moreover, we generalised MFD to the problem to minimising the cost of a circluation decompisition, and showed that the integer version can be approximated even better by combining parity arguments of unitary circulations and a decomposition of such circulations of cost equal to the minimum cost to cover the graph. Finally, we have corroborated the complexity gap between the positive integer and the full integer case by disproving a conjecture from~\cite{kloster2018practical} (also motivating the heuristic in~\cite{shao2017theory}), which would have had sped up their FPT algorithm for \mfdpos.

Our results open up new avenues for further research on MFD. For example, can the width help find larger classes of graphs for which some greedy path removal (or even some sort of greedy \emph{path cover} removal) algorithms have a guaranteed approximation factor? 
Can we get $\Omega(n)$ worst case approximation ratio of greedy-weight for dense graphs without parallel edges?
Can the power-of-two decomposition approach be applied with other factors besides two? Can better path cover-like lower bounds help (e.g., path covers which cannot use an edge more times than its flow value, also computable in polynomial time)? 
How do our algorithms perform in practice?

\section{Acknowledgements}
This work was partially funded by the European Research Council (ERC) under the European Union's Horizon 2020 research and innovation programme (grant agreement No.~851093, SAFEBIO), partially by the Academy of Finland (grants No.~322595, 328877, 352821, 346968), and partially by the National Science Foundation (NSF) (grants No.~1661530,1759522).

\bibliography{main}

\end{document}